\newcommand{\noun}[1]{\textsc{#1}}
\providecommand{\tabularnewline}{\\}
\theoremstyle{plain}
\newtheorem{thm}{\protect\theoremname}[section]
\theoremstyle{plain}
\newtheorem{lem}[thm]{\protect\lemmaname}
\newenvironment{proof}[1][\protect\proofname]{\par
\normalfont\topsep6\p@\@plus6\p@\relax
\trivlist
\itemindent\parindent
\item[\hskip\labelsep
\scshape
#1]\ignorespaces
}{%
\endtrivlist\@endpefalse
}
\providecommand{\proofname}{Proof}
\DeclareMathOperator{\PARTS}{PARTS}
\DeclareMathOperator{\adr}{adr}
\providecommand{\lemmaname}{Lemma}
\providecommand{\theoremname}{Theorem}
\begin{document}
\title{Complexity of a Problem Concerning Reset Words for Eulerian Binary Automata\tnoteref{t1}} 
\tnotetext[t1]{Research supported by the Czech Science Foundation grant GA14-10799S.}

\author{Vojt\v{e}ch Vorel}

\address{Faculty of Mathematics and Physics, Charles University\\Malostranské
nám. 25, 118 00 Prague, Czech Republic}
\begin{abstract}
A word is called a reset word for a deterministic finite automaton
if it maps all the states of the automaton to a unique state. Deciding
about the existence of a reset word of a given maximum length for
a given automaton is known to be an NP-complete problem. We prove
that it remains NP-complete even if restricted to Eulerian automata
with binary alphabets, as it has been conjectured by Martyugin (2011).
\end{abstract}
\maketitle

\section{Introduction and Preliminaries}

A \emph{deterministic finite automaton }is a triple $A=\left(Q,X,\delta\right)$,
where $Q$ and $X$ are finite sets and $\delta$ is an arbitrary
mapping $Q\times X\rightarrow Q$. Elements of $Q$ are called \emph{states},
$X$ is the \emph{alphabet}. The \emph{transition function} $\delta$
can be naturally extended to $Q\times X^{\star}\rightarrow Q$, still
denoted by $\delta$. We extend it also by defining 
\[
\delta\!\left(S,w\right)=\left\{ \delta\!\left(s,w\right)\mid s\in S,w\in X^{\star}\right\} 
\]
 for each $S\subseteq Q$. If the automaton is fixed, we write
\[
r\overset{w}{\longrightarrow}s
\]
instead of $\delta\!\left(r,w\right)=s$.

For a given automaton $A=\left(Q,X,\delta\right)$, we call $w\in X^{\star}$
a \emph{reset word} if 
\[
\left|\delta\!\left(Q,w\right)\right|=1.
\]
If such a word exists, we call the automaton \emph{synchronizing}.
Note that each word having a reset word as a factor is also a reset
word.

A need for finding reset words appears in several fields of mathematics
and engineering. Classical applications (see \citep{VOL1}) include
model-based testing, robotic manipulation, and symbolic dynamics,
but there are important connections also with information theory \citep{TRS1}
and with formal models of biomolecular processes \citep{BON1}.

The \emph{\v{C}ern\'{y} Conjecture}, a longstanding open problem,
claims that each synchronizing automaton has a reset word of length
$\left(\left|Q\right|-1\right)^{2}$. Though it still remains open,
there are many weaker results in this field, see e.g. \citep{STE5,GRE1}
for recent ones%
\footnote{The result published by Trahtman \citep{TRA1} in 2011 has turned
out to be proved incorrectly.%
}.

Various computational problems arise from the study of synchronization:
\begin{itemize}
\item \emph{Given an automaton, decide if it is synchronizing.} Relatively
simple algorithm, which could be traced back to \citep{CER1}, works
in polynomial time. 
\item \emph{Given a synchronizing automaton and a number $d$, decide if
$d$ is the length of shortest reset words.} This has been shown to
be both NP-hard \citep{EPP1} and coNP-hard. More precisely, it is
DP-complete \citep{OLS1}.
\item \emph{Given a synchronizing automaton and a number $d$, decide if
there exists a reset word of length $d$.} This problem is of our
interest. Lying in NP, it is not so computationally hard as the previous
problem. However, it is proven to be NP-complete \citep{EPP1}. Following
the notation of \citep{MAR2}, we call it \noun{Syn}. Assuming that
$\mathcal{M}$ is a class of automata and membership in $\mathcal{M}$
is polynomially decidable, we define a restricted problem:\medskip{}
\\
\begin{tabular}{ll}
\multicolumn{2}{l}{\noun{Syn($\mathcal{M}$)}}\tabularnewline
Input: & synchronizing automaton $A=\left([n],X,\delta\right)\in\mathcal{M}$,
$d\in\mathbb{N}$\tabularnewline
Output: & does $A$ have a reset word of length $d$?\tabularnewline
\end{tabular}
\end{itemize}
An automaton $A=\left(Q,X,\delta\right)$ is \emph{Eulerian }if
\[
\sum_{x\in X}\left|\left\{ r\in Q\mid\delta\!\left(r,x\right)=q\right\} \right|=\left|X\right|
\]
for each $q\in Q$. Informally, there should be exactly $\left|X\right|$
transitions incoming to each state.\emph{ }An automaton is\emph{ binary}
if $\left|X\right|=2$. The classes of Eulerian and binary automata
are denoted by $\mathcal{EU}$ and $\mathcal{AL}_{2}$ respectively.

Previous results about various restrictions of \noun{Syn} can be found
in \citep{EPP1,MAR3,MAR2}. Some of these problems turned out to be
polynomially solvable, others are NP-complete. In \citep{MAR2} Martyugin
conjectured that \noun{Syn($\mathcal{EU}\cap\mathcal{AL}_{2}$) }is
NP-complete. This conjecture is confirmed in the rest of the present
paper.

\section{Main Result}

\subsection{Proof Outline}

We prove the NP-completeness of \noun{Syn($\mathcal{EU}\cap\mathcal{AL}_{2}$)}
by a polynomial reduction from \noun{3-SAT}. So, for arbitrary propositional
formula $\phi$ in 3-CNF we construct an Eulerian binary automaton
$A$ and a number $d$ such that
\begin{equation}
\phi\mbox{ is satisfiable \ensuremath{\Leftrightarrow A\mbox{ has a reset word of length \ensuremath{d}}}}.\label{eq:main}
\end{equation}
For the rest of the paper we fix a formula 
\[
\phi=\bigwedge_{i=1}^{m}\bigvee_{\lambda\in C_{i}}\lambda
\]
 on $n$ variables where each $C_{i}$ is a three-element set of literals,
i.e. subset of 
\[
L_{\phi}=\left\{ x_{1},\dots,x_{n},\neg x_{1},\dots,\neg x_{n}\right\} .
\]
We index the literals $\lambda\in L_{\Phi}$ by the following mapping
$\kappa$:\medskip{}

\begin{center}
\begin{tabular}{|>{\centering}p{0.8cm}||>{\centering}p{0.84cm}|>{\centering}p{0.84cm}|>{\centering}p{0.2cm}|>{\centering}p{0.84cm}|>{\centering}p{0.84cm}|>{\centering}p{0.84cm}|>{\raggedright}p{0.2cm}|>{\centering}p{0.84cm}|}
\hline 
$\lambda$ & $x_{1}$ & $x_{2}$ & $\!\!\dots$ & $x_{n}$ & $\neg x_{1}$ & $\neg x_{2}$ & $\!\!\dots$ & $\neg x_{n}$\tabularnewline
\hline 
$\kappa\!\left(\lambda\right)$ & $0$ & $1$ & $\!\!\dots$ & $n-1$ & $n$ & $n+1$ & $\!\!\dots$ & $2n-1$\tabularnewline
\hline 
\end{tabular}\medskip{}

\par\end{center}

Let $A=\left(Q,X,\delta\right)$, $X=\left\{ a,b\right\} $. Because
the structure of the automaton $A$ will be very heterogeneous, we
use an unusual method of description. The basic principles of the
method are:
\begin{itemize}
\item We describe the automaton $A$ via a labeled directed multigraph $G$,
representing the automaton in a standard way: edges of $G$ are labeled
by single letters $a$ and $b$ and carry the structure of the function
$\delta$. Paths in $G$ are thus \emph{labeled} by words from $\left\{ a,b\right\} ^{\star}$.
\item There is a collection of labeled directed multigraphs called \emph{templates}.
The graph $G$ is one of them. Another template is \texttt{SINGLE},
which consists of one vertex and no edges.
\item Each template \texttt{T}\textsf{$\neq$}\texttt{SINGLE} is expressed
in a fixed way as a disjoint union through a set \textsf{\textbf{$\PARTS_{\mathsf{\mathtt{T}}}$}}
of its proper subgraphs (the \emph{parts }of \texttt{T}), extended
by a set of additional edges (the \emph{links }of \texttt{T}). Each
$H\in\PARTS_{\mathsf{\mathtt{T}}}$ is isomorphic to some template
\texttt{U}. We say that $H$ \emph{is} \emph{of type }\texttt{U}\emph{.}
\item Let $q$ be a vertex of a template \texttt{T}, lying in a subgraph
$H\in\PARTS_{\mathtt{T}}$ which is of type \texttt{U} via a vertex
mapping $\rho:H\rightarrow\mathsf{\mathtt{U}}$. The \emph{local address
$\adr_{\mathsf{\mathtt{T}}}\!\left(q\right)$} is a finite string
of identifiers separated by ,,|''. It is defined inductively by
\[
\adr_{\mathsf{\mathtt{T}}}\!\left(q\right)=\begin{cases}
H\mid\adr_{\mathsf{\mathtt{U}}}\!\left(\rho\!\left(q\right)\right) & \mbox{if \ensuremath{\mathsf{\mathtt{U}}\neq\mathsf{\mathtt{SINGLE}}}}\\
H & \mbox{if \ensuremath{\mathsf{\mathtt{U}=\mathtt{SINGLE}}}}.
\end{cases}
\]
The string $\adr_{G}\!\left(q\right)$ is used as a regular vertex
identifier.
\end{itemize}
Having a word $w\in X^{\star}$, we denote a $t$-th letter of $w$
by $w_{t}$ and define the set $S_{t}=\delta\!\left(Q,w_{1}\dots w_{t}\right)$
of \emph{active states} \emph{at time $t$}. Whenever we depict a
graph, a solid arrow stands for the label $a$ and a dotted arrow
stands for the label $b$.

\subsection{Description of the Graph $G$}

\begin{flushleft}
Let us define all the templates and informally comment on their purpose.
Figure \ref{fig:abs} defines the template \texttt{ABS}, which does
not depend on the formula $\phi$.
\begin{figure}[H]
\begin{minipage}[t]{0.45\columnwidth}%
\begin{center}
\includegraphics{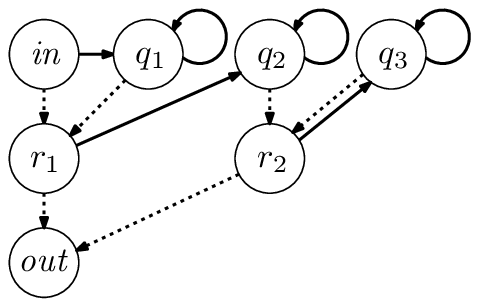}\caption{\label{fig:abs}Template \texttt{ABS}}

\par\end{center}%
\end{minipage}\hfill{}%
\begin{minipage}[t]{0.45\columnwidth}%
\begin{center}
\includegraphics{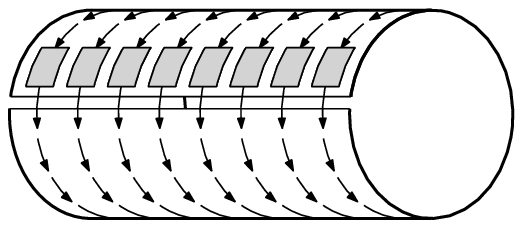}\caption{\label{fig:Cylinder}A barrier of \texttt{ABS} parts}

\par\end{center}%
\end{minipage}
\end{figure}

\par\end{flushleft}

The state $\mathit{out}$ of a part of type \texttt{ABS} is always
inactive after application of a word of length at least 2 which does
not contain $b^{2}$ as a factor. This allows us to ensure the existence
of a relatively short reset word. Actually, large areas of the graph
(namely the \texttt{CLAUSE($\dots$)} parts) have roughly the shape
depicted in Figure \ref{fig:Cylinder}, a cylindrical structure with
a horizontal barrier of \texttt{ABS }parts. If we use a sufficiently
long word with no occurrence of $b^{2}$, the edges outgoing from
the \texttt{ABS} parts are never used and almost all states become
inactive.

\begin{figure}[H]
\begin{centering}
\includegraphics{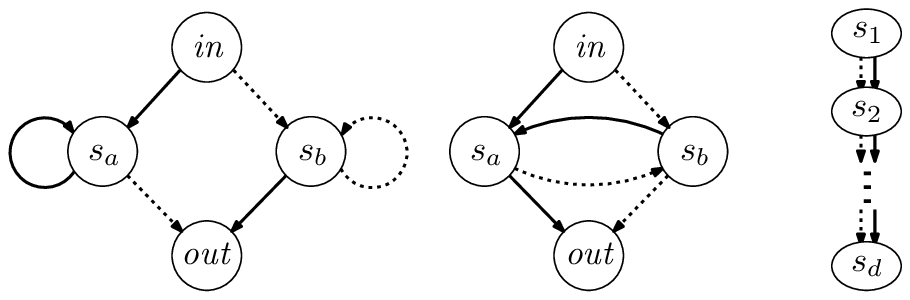}
\par\end{centering}

\caption{\label{fig:cca_cci_abs}Templates \texttt{CCA}, \texttt{CCI} and \texttt{PIPE($d$)}
respectively}
\end{figure}
Figure \ref{fig:cca_cci_abs} defines simple templates \texttt{CCA},
\texttt{CCI} and \texttt{PIPE($d$)} for each $d\geq1$. The activity
of an $\mathit{out}$ state depends on the last two letters applied.
In the case of \texttt{CCA} it is inactive if (and typically only
if) the two letters were equal. In the case of \texttt{CCI} it works
oppositely, equal letters correspond to active $\mathit{out}$ state.
One of the key ideas of the entire construction is the following.\emph{
}Let there be a subgraph of the form

\begin{equation}
\begin{array}{c}
\mbox{part }\mbox{of type \texttt{PIPE(\ensuremath{d})}}\\
\downarrow a,b\\
\mbox{part }\mbox{of type \texttt{CCA} or \texttt{CCI}}\\
\downarrow a,b\\
\mbox{part }\mbox{of type \texttt{PIPE(\ensuremath{d})}.}
\end{array}\label{eq:coders+pipes}
\end{equation}
Before the synchronization process starts, all the states are active.
As soon as the second letter of an input word is applied, the activity
of the $\mathit{out}$ state starts to depend on the last two letters
and the pipe below keeps a record of its previous activity. We say
that a part $H$ of type \texttt{PIPE($d$)} \emph{records }a sequence
$B_{1}\dots B_{d}\in\left\{ \mathbf{0},\mathbf{1}\right\} ^{d}$ \emph{at
time }$t$, if it holds that
\[
B_{k}=\mathbf{1}\Leftrightarrow H|s_{k}\notin S_{t}.
\]

In order to continue with defining templates, let us define a set
$M_{\phi}$ containing all the literals from $L_{\phi}$ and some
auxiliary symbols:
\[
M_{\phi}=L_{\phi}\cup\left\{ y_{1},\dots,y_{n}\right\} \cup\left\{ z_{1},\dots,z_{n}\right\} \cup\left\{ q,q',r,r'\right\} .
\]
We index the $4n+4$ members $\nu\in M_{\phi}$ by the following mapping
$\mu$:

\begin{flushleft}
\begin{tabular}{|>{\centering}p{0.8cm}||>{\centering}p{0.84cm}|>{\centering}p{0.84cm}|>{\centering}p{0.84cm}|>{\centering}p{0.84cm}|>{\centering}p{0.84cm}|>{\centering}p{0.84cm}|>{\raggedright}p{1.5mm}|>{\centering}p{0.84cm}|>{\centering}p{0.84cm}|}
\hline 
$\nu$ & $q$ & $r$ & $y_{1}$ & $x_{1}$ & $y_{2}$ & $x_{2}$ & $\!\!\dots$ & $y_{n}$ & $x_{n}$\tabularnewline
\hline 
$\mu\!\left(\nu\right)$ & $1$ & $2$ & $3$ & $4$ & $5$ & $6$ & $\!\!$ & $2n+1$ & $2n+2$\tabularnewline
\hline 
\end{tabular}
\par\end{flushleft}

\begin{flushleft}
\begin{tabular}{|>{\centering}p{0.8cm}||>{\centering}p{0.84cm}|>{\centering}p{0.84cm}|>{\centering}p{0.84cm}|>{\centering}p{0.84cm}|>{\centering}p{0.84cm}|>{\centering}p{0.84cm}|>{\raggedright}p{1.5mm}|>{\centering}p{0.84cm}|>{\centering}p{0.84cm}|}
\hline 
$\nu$ & $q'$ & $r'$ & $z_{1}$ & $\neg x_{1}$ & $z_{2}$ & $\neg x_{2}$ & $\!\!\dots$ & $z_{n}$ & $\neg x_{n}$\tabularnewline
\hline 
$\mu\!\left(\nu\right)$ & $2n+3$ & $2n+4$ & $2n+5$ & $2n+6$ & $2n+7$ & $2n+8$ & $\!\!\dots$ & $4n+3$ & $4n+4$\tabularnewline
\hline 
\end{tabular}
\par\end{flushleft}

The inverse mapping is denoted by $\mu'$. For each $\lambda\in L_{\phi}$
we define templates \texttt{INC($\lambda$)} and \texttt{NOTINC($\lambda$)},
both consisting of $12n+12$ \texttt{SINGLE} parts identified by elements
of $\left\{ 1,2,3\right\} \times M_{\phi}$. As depicted by Figure
\ref{fig:inc}, the links of \texttt{INC($\lambda$)}are:
\[
\begin{aligned}\left(1,\nu\right)\overset{a}{\longrightarrow} & \begin{cases}
\left(2,\lambda\right) & \mbox{if }\nu=\lambda\mbox{ or }\nu=r\\
\left(2,\nu\right) & \mbox{otherwise}
\end{cases}\\
\left(2,\nu\right)\overset{a}{\longrightarrow} & \begin{cases}
\left(3,q\right) & \mbox{if }\nu=r\mbox{ or }\nu=q\\
\left(3,\nu\right) & \mbox{otherwise}
\end{cases}
\end{aligned}
\qquad\begin{aligned}\left(1,\nu\right)\overset{b}{\longrightarrow} & \begin{cases}
\left(2,r\right) & \mbox{if }\nu=\lambda\mbox{ or }\nu=r\\
\left(2,\nu\right) & \mbox{otherwise}
\end{cases}\\
\left(2,\nu\right)\overset{b}{\longrightarrow} & \begin{cases}
\left(3,r\right) & \mbox{if }\nu=r\mbox{ or }\nu=q\\
\left(3,\nu\right) & \mbox{otherwise}
\end{cases}
\end{aligned}
\]
\begin{figure}
\begin{centering}
\subfloat[\label{fig:inc}\texttt{INC($\lambda$)}]{\begin{centering}
\includegraphics{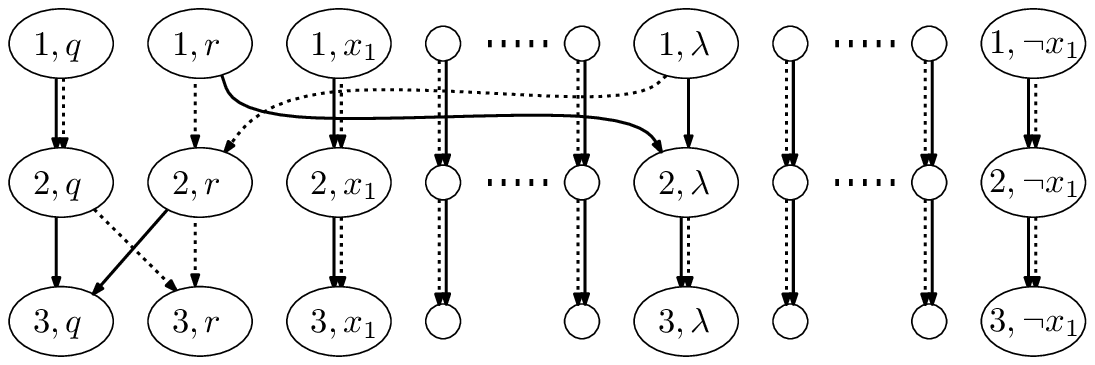}
\par\end{centering}

}\smallskip{}
\subfloat[\label{fig:notinc}\texttt{NOTINC($\lambda$)}]{\begin{centering}
\includegraphics{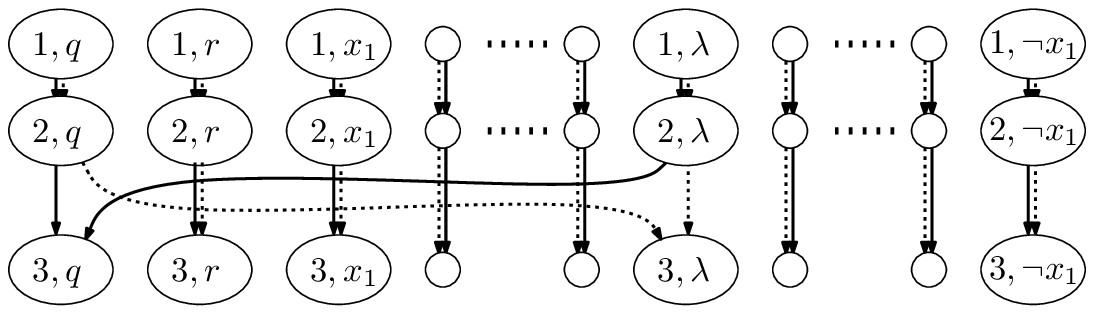}
\par\end{centering}

}
\par\end{centering}

\caption{Templates \texttt{INC($\lambda$)} and \texttt{NOTINC($\lambda$)}}
\end{figure}
\begin{figure}
\centering{}\includegraphics{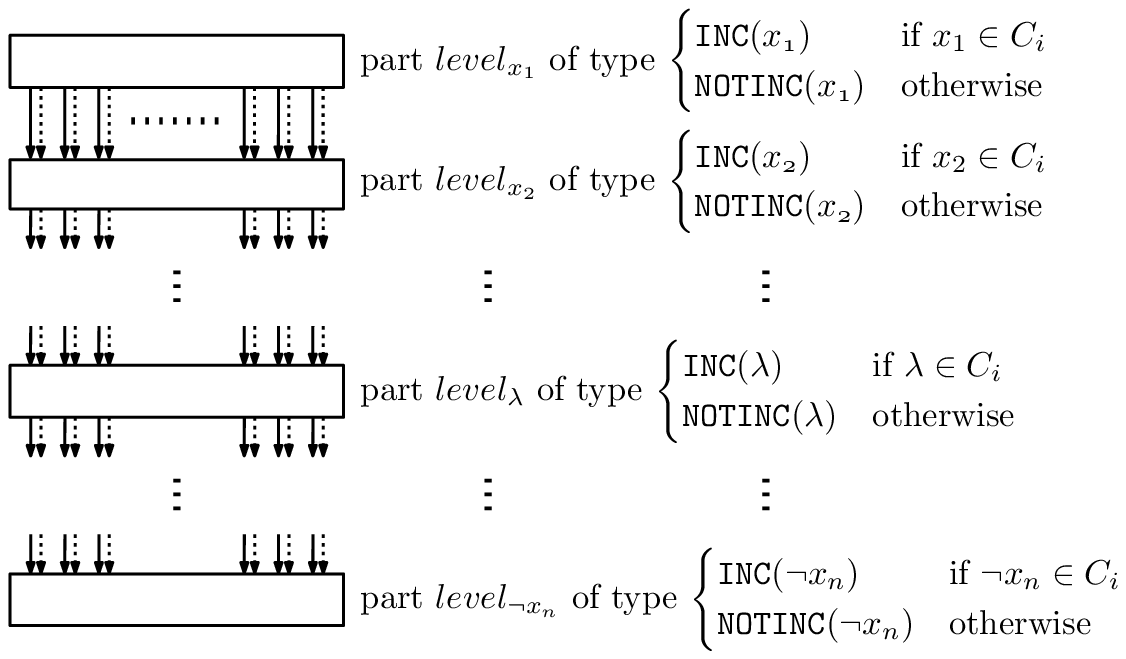}\caption{\label{fig:tester}Template \texttt{TESTER}}
\end{figure}
Note that we use the same identifier for an one-vertex subgraph and
for its vertex. As it is clear from Figure \ref{fig:notinc}, the
links of \texttt{NOTINC($\lambda$)} are: 
\[
\begin{aligned}\left(1,\nu\right)\overset{a}{\longrightarrow} & \,\left(2,\lambda\right)\\
\left(2,\nu\right)\overset{a}{\longrightarrow} & \begin{cases}
\left(3,q\right) & \mbox{if }\nu=q\mbox{ or }\nu=\lambda\\
\left(3,\nu\right) & \mbox{otherwise}
\end{cases}
\end{aligned}
\qquad\begin{aligned}\left(1,\nu\right)\overset{b}{\longrightarrow} & \,\left(2,r\right)\\
\left(2,\nu\right)\overset{b}{\longrightarrow} & \begin{cases}
\left(3,\lambda\right) & \mbox{if }\nu=q\mbox{ or }\nu=\lambda\\
\left(3,\nu\right) & \mbox{otherwise}
\end{cases}
\end{aligned}
\]

The key property of such templates comes to light when we need to
apply some two-letter word in order to make the state $\left(3,\lambda\right)$
inactive assuming $\left(1,r\right)$ inactive. If also $\left(1,\lambda\right)$
is initially inactive, we can use the word $a^{2}$ in both templates.
If it is active (which corresponds to the idea of unsatisfied literal
$\lambda$), we discover the difference between the two templates:
The word $a^{2}$ works if the type is \texttt{NOTINC($\lambda$)},
but fails in the case of \texttt{INC($\lambda$)}. Such\emph{ failure
}corresponds to the idea of unsatisfied literal $\lambda$ occurring
in a clause of $\phi$.

For each clause (each $i\in\left\{ 1,\dots,m\right\} $) we define
a template \texttt{TESTER($i$)}. It consists of $2n$ serially linked
parts, namely $\mathit{level}_{\lambda}$ for each $\lambda\in L_{\phi}$,
each of type \texttt{INC($\lambda$)} or \texttt{NOTINC($\lambda$)}.
The particular type of each $\mathit{level}_{\lambda}$ depends on
the clause $C_{i}$ as seen in Figure \ref{fig:tester}, so exactly
three of them are always of type \texttt{INC($\dots$)}. If the corresponding
clause is unsatisfied, each of its three literals is unsatisfied,
which causes three\emph{ }failures within the levels. Three failures
imply at least three occurrences of $b$, which turns up to be too
much for a reset word of certain length to exist. Clearly we still
need some additional mechanisms to realize this vague vision.

Figure \ref{fig:forcer_limiter} defines templates \texttt{FORCER}
and \texttt{LIMITER}. The idea of template \texttt{FORCER} is simple.
Imagine a situation when $q_{1,0}$ or $r_{1,0}$ is active and we
need to deactivate the entire forcer by a word of length at most $2n+3$.
Any use of $b$ would cause an unbearable delay, so if such a word
exists, it starts by $a^{2n+2}$.

\begin{flushleft}
The idea of \texttt{LIMITER} is similar, but we tolerate some occurrences
of $b$ here, namely two of them. This works if we assume $s_{1,0}$
active and it is necessary to deactivate the entire limiter by a word
of length at most $6n+1$. 
\begin{figure}
\begin{centering}
\includegraphics{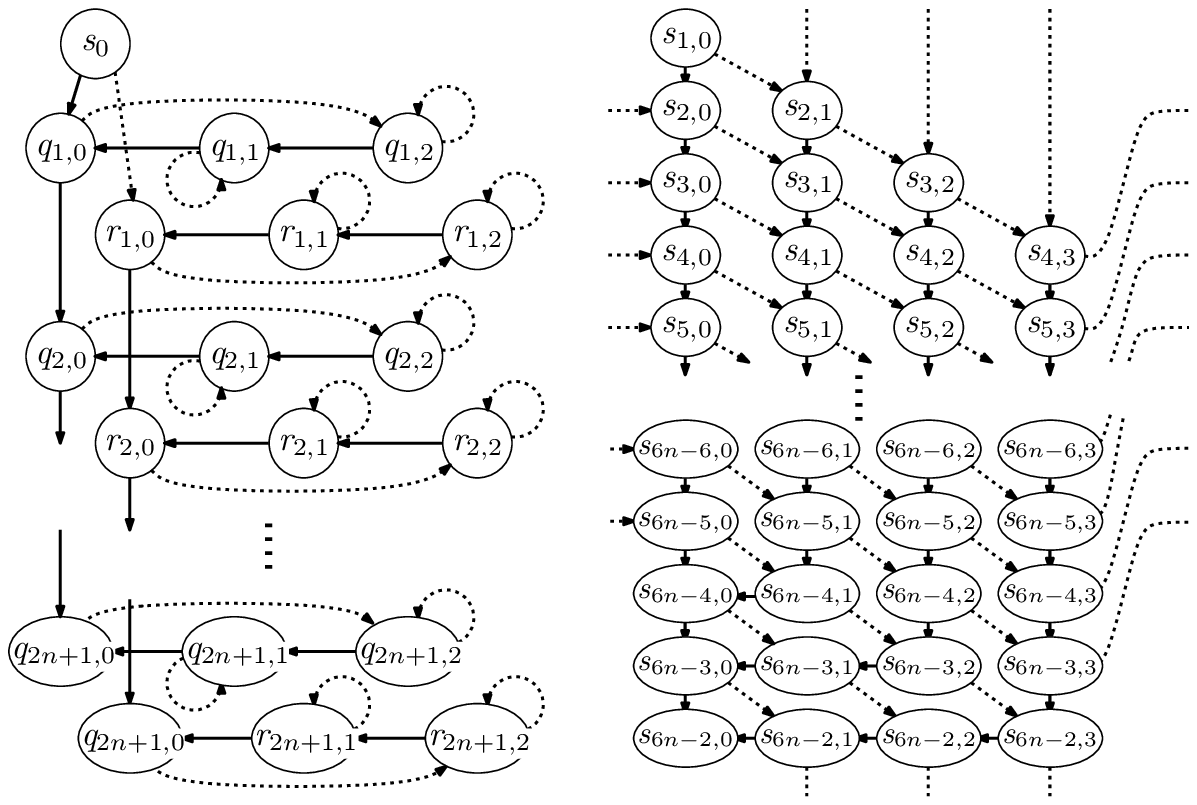}
\par\end{centering}

\caption{\label{fig:forcer_limiter}Templates \texttt{FORCER} and \texttt{LIMITER}
respectively}
\end{figure}
 
\par\end{flushleft}

We also need a template \texttt{PIPES($d,k$)} for each $d,k\geq1$.
It consists just of $k$ parallel pipes of length $d$. Namely there
is a \texttt{SINGLE} part $s_{d',k'}$ for each $d'\leq d$, $k'\leq k$
and all the edges are of the form $s_{d',k'}\longrightarrow s_{d'+1,k'}$.

\begin{flushleft}
The most complex templates are \texttt{CLAUSE($i$)} for each $i\in\left\{ 1,\dots,m\right\} $.
Denote
\begin{eqnarray*}
\alpha_{i} & = & \left(i-1\right)\left(12n-2\right),\\
\beta_{i} & = & \left(m-i\right)\left(12n-2\right).
\end{eqnarray*}
\begin{figure}
\begin{centering}
\includegraphics{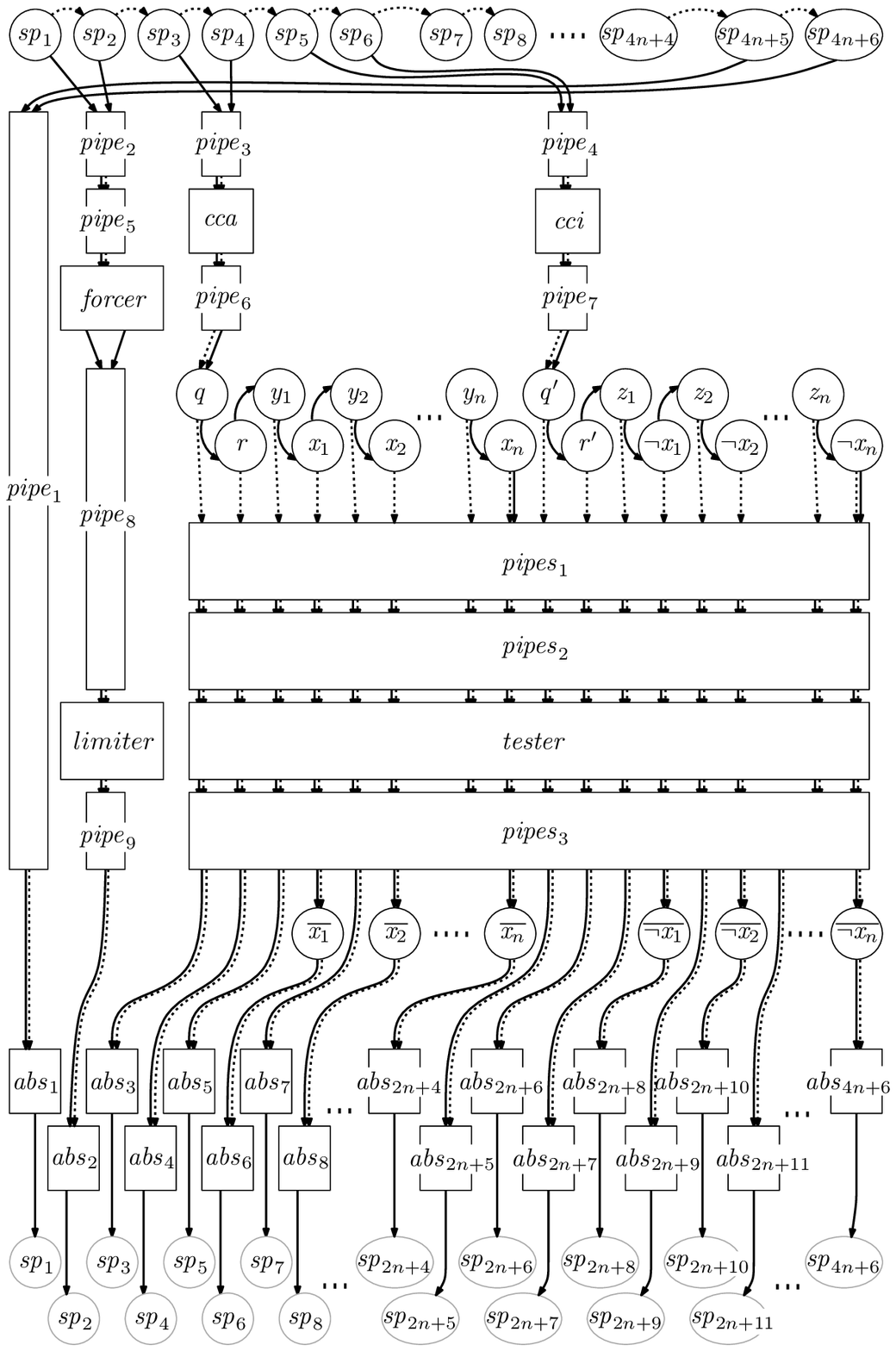}
\par\end{centering}

\caption{\label{fig:clause}Template \texttt{CLAUSE($i$)}}
\end{figure}
As shown in Figure \ref{fig:clause}, \texttt{CLAUSE($i$)} consists
of the following parts:
\par\end{flushleft}
\begin{itemize}
\item Parts $\mathit{sp}_{1},\dots,\mathit{sp}_{4n+6}$ of type \texttt{SINGLE}.
\item Parts $\mathit{abs}_{1},\dots,\mathit{abs}_{4n+6}$ of type \texttt{ABS}.
The entire template has a shape similar to Figure \ref{fig:Cylinder},
including the barrier of \texttt{ABS} parts.
\item Parts $\mathit{pipe}_{2}$, $\mathit{pipe}_{3}$, $\mathit{pipe}_{4}$
of types \texttt{PIPE($2n-1$)} and $\mathit{pipe}_{6}$, $\mathit{pipe}_{7}$
of types \texttt{PIPE($2n+2$)}.
\item Parts $\mathit{cca}$ and $\mathit{cci}$ of types \texttt{CCA} and
\texttt{CCI} respectively. Together with the pipes above they realize
the idea described in (\ref{eq:coders+pipes}). As they form two constellations
which work simultaneously, the parts $\mathit{pipe}_{6}$ and $\mathit{pipe}_{7}$
typically record mutually inverse sequences. We interpret them as
an assignment of the variables $x_{1},\dots,x_{n}$. Such assignment
is then processed by the tester.
\item A part $\nu$ of type \texttt{SINGLE} for each $\nu\in M_{\phi}$. 
\item A part $\mathit{tester}$ of type \texttt{TESTER($i$)}.
\item A part $\overline{\lambda}$ of type \texttt{SINGLE} for each $\lambda\in L_{\phi}$.
While describing the templates \texttt{INC($\lambda$)} and \texttt{NOTINC($\lambda$)}
we claimed that in certain case there arises a need to make the state
$\left(3,\lambda\right)$ inactive. This happens when the border of
inactive area moves down through the tester levels. The point is that
any word of length $6n$ deactivates the entire tester, but we need
to ensure that some tester columns, namely the $\kappa\!\left(\lambda\right)$-th
for each $\lambda\in L_{\phi}$, are deactivated one step earlier.
If some of them is still active just before the deactivation of tester
finishes, the state $\overline{\lambda}$ becomes active, which slows
down the synchronization process.
\item Parts $\mathit{pipes}_{1}$, $\mathit{pipes}_{2}$ and $\mathit{pipes}_{3}$
of types \texttt{PIPES($\alpha_{i},4n+4$)}, \texttt{PIPES($6n-2,4n+4$)}
and \texttt{PIPES($\beta_{i},4n+4$)} respectively. There are multiple
clauses in $\phi$, but multiple testers cannot work in parallel.
That is why each of them is padded by a passive \texttt{PIPES($\dots$)}
part of size depending on particular $i$. If $\alpha_{i}=0$ or $\beta_{i}=0$,
the corresponding \texttt{PIPES} part is not present in $\mathit{cl}_{i}$.
\item Parts $\mathit{pipe}_{1}$, $\mathit{pipe}_{5}$, $\mathit{pipe}_{8}$,
$\mathit{pipe}_{9}$ of types \texttt{PIPE($12mn+4n-2m+6$)}, \texttt{PIPE($4$)},
\texttt{PIPE($\alpha_{i}+6n-1$)}, \texttt{PIPE($\beta_{i}$)} respectively.
\item The part $\mathit{forcer}$ of type \texttt{FORCER}. This part guarantees
that only the letter $a$ is used in certain segment of the word $w$.
This is necessary for the data produced by $\mathit{cca}$ and $\mathit{cci}$
to safely leave the parts $\mathit{pipe}_{3}$, $\mathit{pipe}_{4}$
and line up in the states of the form $\nu$ for $\nu\in M_{\phi}$,
from where they are shifted to the tester.
\item The part $limiter$ of type \texttt{LIMITER}. This part guarantees
that the letter $b$ occurs at most twice when the border of inactive
area passes through the tester. Because each unsatisfied literal from
the clause requests an occurrence of $b$, only a satisfied clause
meets all the conditions for a reset word of certain length to exist.
\end{itemize}
Links of $\mathtt{CLAUSE(}i\mathtt{)}$, which are not clear from
Figure \ref{fig:clause} are
\[
\begin{aligned}\nu\overset{a}{\longrightarrow} & \,\begin{cases}
\mathit{pipes}_{1}|s_{1,\mu\left(\nu\right)} & \mbox{if }\nu=\neg x_{n}\\
\mu'\!\left(\mu\!\left(\nu\right)+1\right) & \mbox{otherwise}
\end{cases}\end{aligned}
\qquad\begin{aligned}\nu\overset{b}{\longrightarrow} & \,\mathit{pipes}_{1}|s_{1,\mu\left(\nu\right)}\end{aligned}
\]
for each $\nu\in M_{\phi}$ and
\[
\begin{aligned}\mathit{pipes}_{3}|s_{\beta_{i},k}\overset{a,b}{\longrightarrow} & \begin{cases}
\overline{\mu'\!\left(k\right)} & \mbox{if }\mu'\!\left(k\right)\in L_{\phi}\\
\mathit{abs}_{k+2}|\mathit{in} & \mbox{otherwise}
\end{cases}\end{aligned}
\qquad\begin{aligned}\overline{\lambda}\overset{a,b}{\longrightarrow} & \, abs_{\mu\left(\lambda\right)+2}|\mathit{in}\end{aligned}
\]
for each $k\in\left\{ 1,\dots,4n+4\right\} $, $\lambda\in L_{\phi}$.

We are ready to form the whole graph $G$, see Figure \ref{fig:wholeG}.
For each $i,k\in\left\{ 1,\dots m\right\} $ there are parts $\mathit{cl}_{k},\mathit{abs}_{k}$
of types \texttt{CLAUSE($i$)} and \texttt{ABS} respectively and parts
$q_{k},r_{k},r'_{k},s_{1},s_{2}$ of type \texttt{SINGLE}. The edge
incoming to a $\mathit{cl}_{i}$ part ends in $\mathit{cl}_{i}|\mathit{sp}_{1}$,
the outgoing one starts in $\mathit{cl}_{i}|\mathit{sp}_{4n+6}$.
When no states outside \texttt{ABS} parts are active within each \texttt{CLAUSE($\dots$)}
part and no $\mathit{out}$, $r_{1}$ nor $r_{2}$ state is active
in any \texttt{ABS} part, the word $b^{2}ab^{4n+m+7}$ takes all active
states to $s_{2}$ and completes the synchronization.
\begin{figure}
\begin{centering}
\includegraphics{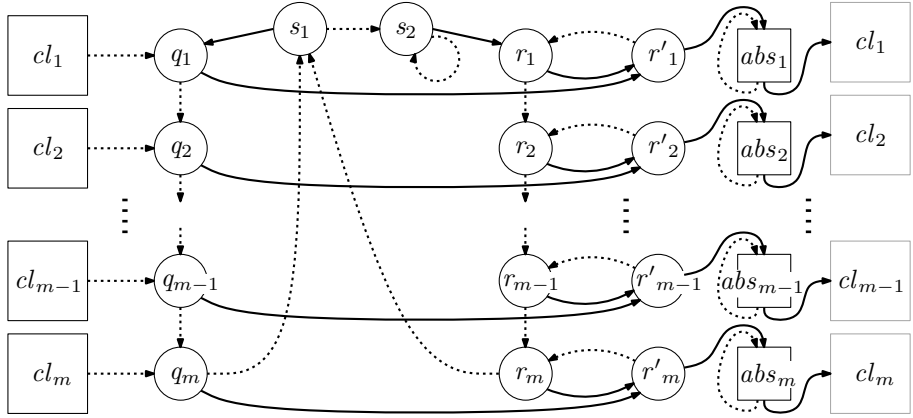}
\par\end{centering}

\caption{\label{fig:wholeG}The graph $G$}
\end{figure}
 Graph $G$ does not fully represent the automaton $A$ yet because
there are
\begin{itemize}
\item $8mn+4m$ vertices with only one outgoing edge, namely $\mathit{cl}_{i}|\mathit{abs}_{k}|\mathit{out}$
and $\mathit{cl}_{i}|\mathit{sp}_{l}$ for each $i\in\left\{ 1,\dots,m\right\} ,k\in\left\{ 1,\dots,4n+6\right\} ,l\in\left\{ 7,\dots,4n+4\right\} $,
\item $8mn+4m$ vertices with only one incoming edge: $\mathit{cl}_{i}|\nu$
and $\mathit{cl}_{i}|\mathit{pipes}_{1}|\left(1,\nu'\right)$ for
each $i\in\left\{ 1,\dots,m\right\} ,\nu\in M_{\phi}\backslash\left\{ q,q'\right\} ,\nu'\in M_{\phi}\backslash\left\{ x_{n},\neg x_{n}\right\} $. 
\end{itemize}
But we do not need to specify the missing edges exactly, let us just
say that they somehow connect the relevant states and the automaton
$A$ is complete. Let us set 
\[
d=12mn+8n-m+18
\]
 and prove that the equivalence (\ref{eq:main}) holds.

\subsection{From an Assignment to a Word}

First let us suppose that there is an assignment $\xi_{1},\dots,\xi_{n}\in\left\{ \mathbf{0},\mathbf{1}\right\} $
of the variables $x_{1},\dots,x_{n}$ (respectively) satisfying the
formula $\phi$ and prove that the automaton $A$ has a reset word
$w$ of length $d$. For each $j\in\left\{ 1,\dots,n\right\} $ we
denote 
\[
\sigma_{j}=\begin{cases}
a & \mbox{if }\xi_{j}=\mathbf{1}\\
b & \mbox{if }\xi_{j}=\mathbf{0}
\end{cases}
\]
 and for each $i\in\left\{ 1,\dots,m\right\} $ we choose a satisfied
literal $\overline{\lambda}_{i}$ from $C_{i}$. We set
\[
w=a^{2}\left(\sigma_{n}a\right)\left(\sigma_{n-1}a\right)\dots\left(\sigma_{1}a\right)aba^{2n+3}b\left(a^{6n-2}v_{1}\right)\dots\left(a^{6n-2}v_{m}\right)b^{2}ab^{4n+m+7},
\]
where for each $i\in\left\{ 1,\dots,m\right\} $ we use the word 
\[
v_{i}=u_{i,x_{1}}\dots u_{i,x_{n}}u_{i,\neg x_{1}}\dots u_{i,\neg x_{n}},
\]
denoting
\[
u_{i,\lambda}=\begin{cases}
a^{3} & \mbox{if }\lambda=\overline{\lambda}_{i}\mbox{ or }\lambda\notin C_{i}\\
ba^{2} & \mbox{if }\lambda\neq\overline{\lambda}_{i}\mbox{ and }\lambda\in C_{i}
\end{cases}
\]
for each $\lambda\in L_{\phi}$. We see that $\left|v_{i}\right|=6n$
and therefore 
\[
\left|w\right|=4n+8+m\left(12n-2\right)+4n+m+10=12mn+8n-m+18=d.
\]
Let us denote 
\[
\gamma=12mn+4n-2m+9
\]
and 
\[
\overline{S}_{t}=Q\backslash S_{t}
\]
 for each $t\leq d$. Because the first occurrence of $b^{2}$ in
$w$ starts by the $\gamma$-th letter, we have:
\begin{lem}
\label{lem: no abs out}Each state of a form $\mathit{cl}_{\dots}|\mathit{abs}_{\dots}|\mathit{out}$
or $\mathit{abs_{\dots}}|\mathit{out}$ lies in $\overline{S}_{2}\cap\dots\cap\overline{S}_{\gamma}$.
\end{lem}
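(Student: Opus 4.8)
The plan is to analyze the behaviour of the \texttt{ABS} parts under the prefix $w_1\dots w_\gamma$ of $w$ and show that, regardless of where in the graph a token sits at time $0$, it can only reach an $\mathit{out}$ state of an \texttt{ABS} part by traversing the unique ``escape'' edge out of that part, and that edge is guarded by a configuration which requires the factor $b^2$. Concretely, I would first read off from Figure~\ref{fig:abs} the exact transition structure of the template \texttt{ABS}: it has an $\mathit{in}$ state, an $\mathit{out}$ state, and two auxiliary states $r_1,r_2$ (the ones referred to later in the description of $G$), wired so that $\mathit{out}$ receives a transition only from a state whose own activity forces the previous two input letters to have been $bb$. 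This is precisely the informal claim made right after Figure~\ref{fig:abs}: ``the state $\mathit{out}$ of a part of type \texttt{ABS} is always inactive after application of a word of length at least $2$ which does not contain $b^2$ as a factor.''

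Second, I would make that claim into a clean invariant. For every \texttt{ABS} part $P$ occurring in $G$ (either directly as one of the $\mathit{abs}_k$ or as a part $\mathit{cl}_i|\mathit{abs}_k$ inside a \texttt{CLAUSE} part), and for every $t$ with $2\le t\le\gamma$, I claim $P|\mathit{out}\in\overline S_t$. The argument is a short induction on $t$ using the local wiring of \texttt{ABS}: the only in-edges to $P|\mathit{out}$ come from states of $P$ (the escape edge is an outgoing edge of $P$, and $P|\mathit{in}$'s in-edges come from outside but lead into the body of $P$, not directly to $\mathit{out}$), and the unique predecessor of $P|\mathit{out}$ under letter $x$ is active at time $t-1$ only if the letter at time $t-1$ was $b$ and the letter at time $t-2$ was $b$, i.e.\ $w_{t-2}w_{t-1}=bb$. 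Since the first occurrence of $bb$ in $w$ begins at position $\gamma$, there is no $t\le\gamma$ with $w_{t-2}w_{t-1}=bb$ (for $t\ge 3$), and the base case $t=2$ is immediate because after any two letters the predecessor of $\mathit{out}$ has not yet had time to ``charge''; reading off the picture, $P|\mathit{out}\notin S_2$ outright. Hence $P|\mathit{out}\in\overline S_2\cap\dots\cap\overline S_\gamma$, which is exactly the statement of the lemma.

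The one point that needs care — and which I expect to be the main obstacle to writing this cleanly — is that $\delta(Q,\cdot)$ is a \emph{set} image, so ``$P|\mathit{out}\in\overline S_t$'' must be shown for the token set $S_t=\delta(Q,w_1\dots w_t)$, not for a single trajectory; thus I must verify that \emph{no} state whatsoever maps into $P|\mathit{out}$ under $w_{t-1}$ unless its predecessor chain forces $w_{t-2}w_{t-1}=bb$. This reduces to checking that $P|\mathit{out}$ has in-degree consistent with the Eulerian condition using only edges internal to $P$ (the incoming edge of $P$ from the rest of $G$ lands on $P|\mathit{in}$, and $P$'s outgoing edge leaves from $P|\mathit{out}$), so the set of predecessors of $P|\mathit{out}$ is determined entirely by Figure~\ref{fig:abs}. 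Once that local check is in hand, the induction goes through uniformly for all $\texttt{ABS}$ parts simultaneously, and both families named in the lemma ($\mathit{abs}_\dots|\mathit{out}$ at top level and $\mathit{cl}_\dots|\mathit{abs}_\dots|\mathit{out}$) are covered by the same argument since each is the $\mathit{out}$ vertex of some part of type \texttt{ABS}. I would close by remarking that the bound $\gamma=12mn+4n-2m+9$ is exactly one less than the position of the first $b^2$ in $w$, which is why the interval of guaranteed inactivity is $\overline S_2\cap\dots\cap\overline S_\gamma$ and not longer.
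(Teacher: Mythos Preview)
Your proposal is correct and matches the paper's approach: the paper gives no formal proof at all, merely the one-line justification ``Because the first occurrence of $b^{2}$ in $w$ starts by the $\gamma$-th letter,'' together with the earlier informal description of \texttt{ABS}; your write-up is just a fleshed-out version of that same idea. One small slip: the first $b^{2}$ in $w$ actually begins \emph{at} position $\gamma$, not at $\gamma+1$ as your closing remark says (and correspondingly the relevant two letters are $w_{t-1}w_t$, not $w_{t-2}w_{t-1}$); the argument is unaffected because $w_{\gamma-1}=a$, so $w_1\dots w_\gamma$ still contains no factor $b^2$.
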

Let us fix an arbitrary $i\in\left\{ 1,\dots,m\right\} $ and describe
a growing area of inactive states within $\mathit{cl}_{i}$. We use
the following method of verifying inactivity of states: Having a state
$s\in Q$ and $t,k\geq1$ such that any path of length $k$ ending
in $s$ uses a member of $\overline{S}_{t-k}\cap\dots\cap\overline{S}_{t-1}$,
we easily deduce that $s\in\overline{S}_{t}$. In such case let us
just say that $k$ \emph{witnesses} that $s\in\overline{S}_{t}$.
The following claims follow directly from the definition of $w$.
Note that Claim 7 relies on the fact that $b$ occurs only twice in
$v_{i}$.
\begin{lem}
\ \label{lem: growing inactivity}\emph{}\\
\emph{}%
\begin{tabular}{llll}
\noalign{\vskip0.2cm}
\emph{1.} & \emph{$\left\{ \mathit{cl}_{i}|\mathit{sp}_{1},\dots,\mathit{cl}_{i}|\mathit{sp}_{4n+6}\right\} $ } & $\subseteq$ & \emph{$\overline{S}_{2}\cap\dots\cap\overline{S}_{\gamma}$}\tabularnewline
\noalign{\vskip0.2cm}
\emph{2.} & \emph{$\mathit{cl}_{i}|\mathit{pipe}_{2}\cup\mathit{cl}_{i}|\mathit{pipe}_{3}\cup\mathit{cl}_{i}|\mathit{pipe}_{4}$} & $\subseteq$ & \emph{$\overline{S}_{2n+1}\cap\dots\cap\overline{S}_{\gamma}$}\tabularnewline
\noalign{\vskip0.2cm}
\emph{3.} & \emph{$\mathit{cl}_{i}|\mathit{cca}\cup\mathit{cl}_{i}|\mathit{cci}\cup\mathit{cl}_{i}|pipe_{5}$} & $\subseteq$ & $\overline{S}_{2n+5}\cap\dots\cap\overline{S}_{\gamma}$\tabularnewline
\noalign{\vskip0.2cm}
\emph{4.} & \emph{$\mathit{cl}_{i}|\mathit{pipe}_{6}\cup\mathit{cl}_{i}|\mathit{pipe}_{7}\cup\mathit{cl}_{i}|forcer$} & $\subseteq$ & \emph{$\overline{S}_{4n+7}\cap\dots\cap\overline{S}_{\gamma}$}\tabularnewline
\noalign{\vskip0.2cm}
\emph{5.} & \emph{$\left\{ \mathit{cl}_{i}|\nu\,:\,\nu\in M_{\phi}\right\} $} & $\subseteq$ & \emph{$\overline{S}_{4n+8}\cap\dots\cap\overline{S}_{\gamma}$}\tabularnewline
\noalign{\vskip0.2cm}
\emph{6.} & $\mathit{cl}_{i}|\mathit{pipes}_{1}\cup\mathit{cl}_{i}|\mathit{pipes}_{2}\cup\mathit{cl}_{i}|\mathit{pipe}_{8}$ & $\subseteq$ & $\overline{S}_{10n+\alpha_{i}+6}\cap\dots\cap\overline{S}_{\gamma}$\tabularnewline
\noalign{\vskip0.2cm}
\emph{7.} & \emph{$\mathit{cl}_{i}|\mathit{limiter}\cup\mathit{cl}_{i}|\mathit{tester}$} & $\subseteq$ & \emph{$\overline{S}_{16n+\alpha_{i}+6}\cap\dots\cap\overline{S}_{\gamma}$}\tabularnewline
\noalign{\vskip0.2cm}
\emph{8.} & $\mathit{cl}_{i}|\mathit{pipe}_{1}\cup\mathit{cl}_{i}|\mathit{pipe}_{9}\cup\mathit{cl}_{i}|\mathit{pipes}_{3}$  & $\subseteq$ & \emph{$\overline{S}_{\gamma-1}\cap\overline{S}_{\gamma}$}\tabularnewline
\end{tabular}\end{lem}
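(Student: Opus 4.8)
The eight inclusions are proved in the stated order, each serving as input to the next, following the cylindrical picture of Figure~\ref{fig:Cylinder}: inside $\mathit{cl}_i$ the states lying outside the \texttt{ABS} parts form an essentially acyclic structure oriented downward, so it suffices to let a front of inactive states sweep from the top of the cylinder to its bottom and read off from the explicit shape of $w$ how many steps each layer costs. The only tool needed is the witnessing device from the paragraph preceding the statement, applied in the following routine way: if a set $T$ of states is already known to lie in $\overline{S}_{t_0}\cap\dots\cap\overline{S}_{t_1}$, and every transition entering the next layer originates in $T$ or in some state $\mathit{abs}_{\dots}|\mathit{out}$ (which lies in $\overline{S}_2\cap\dots\cap\overline{S}_\gamma$ by Lemma~\ref{lem: no abs out}), then $k=1$ witnesses that the layer lies in $\overline{S}_{t_0+1}\cap\dots\cap\overline{S}_{t_1}$; iterating, a pipe of length $\ell$ hanging below such a layer is emptied $\ell$ further steps later, because a pipe merely shifts activity downward.

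Claim~1 is the base case: the states $\mathit{sp}_j$ receive transitions only from the single edge entering $\mathit{cl}_i$ and from states $\mathit{abs}_{\dots}|\mathit{out}$, so, as $w$ begins with $a^2$ and the $\mathit{out}$-states are vacated from time $2$ onward (Lemma~\ref{lem: no abs out}), no state maps into the $\mathit{sp}$-layer during the interval of times $2,\dots,\gamma$. Claims~2,~3,~6,~8 are then pure pipe-shifting bookkeeping along Figure~\ref{fig:clause}. The parts $\mathit{pipe}_2,\mathit{pipe}_3,\mathit{pipe}_4$ of type \texttt{PIPE($2n-1$)} give the bound $\overline{S}_{2n+1}\cap\dots$ of Claim~2; flushing the \texttt{CCA}/\texttt{CCI} internals (which, once their upstream is dead, cease to be refreshed and so go dead regardless of the ``last two letters'' rule of~(\ref{eq:coders+pipes})) and then $\mathit{pipe}_5$ of type \texttt{PIPE($4$)} adds $4$ more steps for Claim~3; the parts $\mathit{pipes}_1$ and $\mathit{pipes}_2$ of types \texttt{PIPES($\alpha_i,4n+4$)}, \texttt{PIPES($6n-2,4n+4$)} hanging below the $\nu$-layer add $\alpha_i+(6n-2)$ steps, placing them together with $\mathit{pipe}_8$ in $\overline{S}_{10n+\alpha_i+6}\cap\dots$ for Claim~6; and $\mathit{pipe}_1$ of type \texttt{PIPE($12mn+4n-2m+6$)} hangs straight from the top of the cylinder, hence is emptied at time $2+(12mn+4n-2m+6)=\gamma-1$, while $\mathit{pipe}_9$ of type \texttt{PIPE($\beta_i$)} and $\mathit{pipes}_3$ of type \texttt{PIPES($\beta_i,4n+4$)} sit $\beta_i$ below the tester layer of Claim~7 and, since $\alpha_i+\beta_i=(m-1)(12n-2)$, are likewise emptied by time $\gamma-1$ --- this is Claim~8. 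Each of these amounts to the arithmetic of $\alpha_i$, $\beta_i$, $\gamma$, $d$ together with the already noted fact that $w$ has no factor $b^2$ before position $\gamma$, which keeps Lemma~\ref{lem: no abs out} in force.

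Claims~4 and~5 and Claim~7 are the steps that use something beyond pipe-shifting, and I expect the real work to lie here. For Claims~4 and~5 one aligns the relevant window with the block $a^{2n+3}$ that occurs in $w$ between the two isolated $b$'s just after $a^2(\sigma_n a)\cdots(\sigma_1 a)$: this block is long enough (by the remark after Figure~\ref{fig:forcer_limiter}, where a $b$ would cause a fatal delay) to empty the \texttt{FORCER}, so the forcer and the parts $\mathit{pipe}_6,\mathit{pipe}_7$ of type \texttt{PIPE($2n+2$)} below the coders are empty at time $4n+7$, and one further step --- the $b$ that follows the block --- empties the states $\mathit{cl}_i|\nu$, which is Claim~5. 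For Claim~7 one uses that the tester is a serial chain of $2n$ levels of depth $3$, so any $6n$ consecutive steps clear it once its input layer $\mathit{pipes}_2$ is dead (time $10n+\alpha_i+6$ by Claim~6), giving $\overline{S}_{16n+\alpha_i+6}\cap\dots$; and that the \texttt{LIMITER}, whose active window tolerates at most two occurrences of $b$, causes no extra delay because each $v_i$ contains exactly two $b$'s --- the two literals of $C_i$ distinct from $\overline{\lambda}_i$ contribute $ba^2$, whereas $\overline{\lambda}_i$ and every literal outside $C_i$ contribute $a^3$.

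The single genuine obstacle is bookkeeping discipline: one must fix, once and for all, the alignment between positions of $w$ and the layers of Figure~\ref{fig:clause}, check that every critical window --- the \texttt{CCA}/\texttt{CCI} coders, the \texttt{FORCER} block, the $2n$ tester levels, the \texttt{LIMITER} --- sees exactly the harmless letters of $w$, and then verify that the accumulated offsets match the indices $2n+1$, $2n+5$, $4n+7$, $4n+8$, $10n+\alpha_i+6$, $16n+\alpha_i+6$, $\gamma-1$ asserted in the eight items. Everything else is immediate from Lemma~\ref{lem: no abs out}.
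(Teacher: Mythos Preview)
Your overall strategy is the paper's strategy: push a front of inactivity down the cylinder using the witnessing device and Lemma~\ref{lem: no abs out}, reading off the offsets from the explicit form of $w$. For Claims~2, 4--8 your sketch is accurate.

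There is, however, one genuine error, in Claim~3. You write that the \texttt{CCA}/\texttt{CCI} internals, ``once their upstream is dead, cease to be refreshed and so go dead regardless of the `last two letters' rule''. This is false: the state $s_b$ of \texttt{CCA} carries a $b$-self-loop (this is exactly what the paper exploits in the proof of Lemma~\ref{lem: beginning of w}\,(2), arguing that under $b^{d-2}$ every $\mathit{cl}_{\dots}|\mathit{cca}|s_b$ remains active forever), and \texttt{CCI} is symmetric. A dead upstream therefore does \emph{not} empty these parts; one must use the concrete segment $w_{2n+2}\dots w_{2n+5}=a^{2}ba$, which does map every state of $\mathit{cca}$, $\mathit{cci}$ and the parallel $\mathit{pipe}_5$ out. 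Your final paragraph, where you concede that one must ``check that every critical window \dots sees exactly the harmless letters of $w$'', is the real argument; the earlier parenthetical contradicts it and should be deleted. Note also that $\mathit{pipe}_5$ sits \emph{parallel} to $\mathit{cca}/\mathit{cci}$, not below them, so it does not ``add $4$ more steps'' after the coders --- all three are cleared simultaneously by the same four letters.

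A second, smaller gap is the base case of Claim~1. You appeal to Lemma~\ref{lem: no abs out} to say the $\mathit{abs}_{\dots}|\mathit{out}$ states are vacated, but that lemma only guarantees $\overline{S}_2\cap\dots\cap\overline{S}_\gamma$, not $\overline{S}_1$; hence the witness $k=1$ does not yet give $\mathit{sp}_j\in\overline{S}_2$. (You also omit the $b$-chain among the $\mathit{sp}_j$ themselves when listing their predecessors.) The paper closes this by observing directly that no path labelled $a^2$ ends in any $\mathit{sp}_j$, which settles $t=2$; the witness $k=1$ then handles $t\ge 3$ inductively.
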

\begin{proof}
~
\begin{enumerate}
\item \emph{Claim: $\left\{ \mathit{cl}_{i}|\mathit{sp}_{1},\dots,\mathit{cl}_{i}|\mathit{sp}_{4n+6}\right\} \subseteq\overline{S}_{2}\cap\dots\cap\overline{S}_{\gamma}$
}.\\
We have $w_{1}w_{2}=a^{2}$ and there is no path labeled by $a^{2}$
ending in any $\mathit{cl}_{i}|\mathit{sp}_{\dots}$ state, so such
states lie in $\overline{S}_{2}$. For each $t=3,\dots,\gamma$ we
can inductively use $k=1$ to witness the memberships in $\overline{S}_{t}$.
In the induction step we use Lemma \ref{lem: no abs out}, which excludes
the $\mathit{out}$ states of the \texttt{ABS} parts from each corresponding
$\overline{S}_{t-1}$.\smallskip{}

\item \emph{Claim: $\mathit{cl}_{i}|\mathit{pipe}_{2}\cup\mathit{cl}_{i}|\mathit{pipe}_{3}\cup\mathit{cl}_{i}|\mathit{pipe}_{4}\subseteq\overline{S}_{2n+1}\cap\dots\cap\overline{S}_{\gamma}$}.\\
All the memberships are witnessed by $k=2n-1$, because any path of
the length $2n-1$ ending in such state must use a $\mathit{cl}_{i}|\mathit{sp}_{\dots}$
state and such states lie in $\overline{S}_{2}\cap\dots\cap\overline{S}_{\gamma}$
by the previous claim.\smallskip{}

\item \emph{Claim: $\mathit{cl}_{i}|\mathit{cca}\cup\mathit{cl}_{i}|\mathit{cci}\cup\mathit{cl}_{i}|pipe_{5}\subseteq\overline{S}_{2n+5}\cap\dots\cap\overline{S}_{\gamma}$}.\\
We have $w_{2n+2}\dots w_{2n+5}=a^{2}ba$, which clearly maps each
state of \emph{$\mathit{cl}_{i}|\mathit{cca}$},\emph{ $\mathit{cl}_{i}|\mathit{cci}$}
or \emph{$\mathit{cl}_{i}|pipe_{5}$ }out of those parts. Each path
of length $4$ leading into the parts from outside starts in $\overline{S}_{2n+1}$,
so it follows that all the states lie in $\overline{S}_{2n+5}$. To
prove the rest we inductively use the witness $k=1$.\smallskip{}

\item \emph{Claim: $\mathit{cl}_{i}|\mathit{pipe}_{6}\cup\mathit{cl}_{i}|\mathit{pipe}_{7}\cup\mathit{cl}_{i}|forcer\subseteq\overline{S}_{4n+7}\cap\dots\cap\overline{S}_{\gamma}$}.\\
In the cases of \emph{$\mathit{cl}_{i}|\mathit{pipe}_{6}$ and $\mathit{cl}_{i}|\mathit{pipe}_{7}$}
we just use the witness $k=2n+2$. In the case of \emph{$\mathit{cl}_{i}|forcer$
}we proceed the same way as in the previous claim. We have $w_{2n+6}\dots w_{4n+7}=a^{2n+2}$.
Because also $w_{2n+5}=a$, only the states $q_{\dots,0}$ can be
active within the part $\mathit{cl}_{i}|forcer$ in time $2n+6$.
The word $w_{2n+7}\dots w_{4n+7}$ maps all such states out of $\mathit{cl}_{i}|\mathit{forcer}$.
Each path of length $2n+2$ leading into $\mathit{cl}_{i}|\mathit{forcer}$
from outside starts in $\overline{S}_{2n+5}$, so it follows that
all states from $\mathit{cl}_{i}|\mathit{forcer}$ lie in $\overline{S}_{4n+7}$.
To handle $t=4n+8,\dots,\gamma$ we inductively use the witness $k=1$.\smallskip{}

\item \emph{Claim: $\left\{ \mathit{cl}_{i}|\nu\,:\,\nu\in M_{\phi}\right\} \subseteq\overline{S}_{4n+8}\cap\dots\cap\overline{S}_{\gamma}$}.\\
In the cases of $\mathit{cl}_{i}|q$ and $\mathit{cl}_{i}|q'$ we
use the witness $1$. We have $w_{4n+8}=b$ and the only edges labeled
by $b$ incoming to remaining states could be some of the $8mn+4m$
unspecified edges of $G$. But we have $w_{4n+6}w_{4n+7}=a^{2}$,
so each $\mathit{out}$ state of any \texttt{ABS} part lies in $\overline{S}_{4n+7}$
and thus no unspecified edge starts in a state outside $\overline{S}_{4n+7}$.\smallskip{}

\item \emph{Claim: }$\mathit{cl}_{i}|\mathit{pipes}_{1}\cup\mathit{cl}_{i}|\mathit{pipes}_{2}\cup\mathit{cl}_{i}|\mathit{pipe}_{8}\subseteq\overline{S}_{10n+\alpha_{i}+6}\cap\dots\cap\overline{S}_{\gamma}$.\\
We use witnesses $k=\alpha_{i}$ for \emph{$\mathit{cl}_{i}|\mathit{pipes}_{1}$},\emph{
$k=6n-2$} for \emph{$\mathit{cl}_{i}|\mathit{pipes}_{2}$} and $k=\alpha_{i}+6n-1$
for \emph{$\mathit{cl}_{i}|\mathit{pipe}_{8}$.}\smallskip{}

\item \emph{Claim: $\mathit{cl}_{i}|\mathit{limiter}\cup\mathit{cl}_{i}|\mathit{tester}\subseteq\overline{S}_{16n+\alpha_{i}+6}\cap\dots\cap\overline{S}_{\gamma}$}.\\
Because 
\[
w_{4n+\alpha_{i}+9}\dots w_{10n+\alpha_{i}+6}=a^{6n-2},
\]
there are only states of the form $\mathit{cl}_{i}|\mathit{limiter}|s_{\dots,0}$
in the intersection of $\mathit{cl}_{i}|\mathit{limiter}$ and $S_{10n+\alpha_{i}+6}$.
Together with the fact that there are only two occurrences of $b$
in $v_{i}$ it confirms that the case of $\mathit{cl}_{i}|\mathit{limiter}$
holds. The case of \emph{$\mathit{cl}_{i}|\mathit{tester}$} is easily
witnessed by $k=6n$. \smallskip{}

\item \emph{Claim: }$\mathit{cl}_{i}|\mathit{pipe}_{1}\cup\mathit{cl}_{i}|\mathit{pipe}_{9}\cup\mathit{cl}_{i}|\mathit{pipes}_{3}\subseteq\overline{S}_{\gamma-1}\cap\overline{S}_{\gamma}$
.\\
We use witnesses $k=12mn+4n-2m+6$ for \emph{$\mathit{cl}_{i}|\mathit{pipe}_{1}$
}and $k=\beta_{i}$ for \emph{$\mathit{cl}_{i}|\mathit{pipe}_{9},\mathit{cl}_{i}|\mathit{pipes}_{3}$.}
\end{enumerate}
\end{proof}
For each $\lambda\in L_{\phi}$ we ensure by the word $u_{i,\lambda}$
that the $\kappa\!\left(\lambda\right)$-th tester column is deactivated
in advance, namely at time $t=16n+\alpha_{i}+5$. The advance allows
the following key claim to hold true.
\begin{lem}
\label{lem: no lambda bar}\emph{$\left\{ \mathit{cl}_{i}|\overline{\lambda}\,:\,\lambda\in L_{\phi}\right\} \subseteq\overline{S}_{\gamma-1}\cap\overline{S}_{\gamma}$.}\end{lem}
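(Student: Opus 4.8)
The claim is that none of the states $\mathit{cl}_i|\overline{\lambda}$, for $\lambda\in L_\phi$, is active at times $\gamma-1$ or $\gamma$. I would prove this by tracing the single edge structure that governs these states. Each $\overline{\lambda}$ receives its incoming edges (under both letters) either from $\mathit{pipes}_3|s_{\beta_i,k}$ with $\mu'(k)=\lambda$, or possibly from within the $\mathit{cl}_i|\overline{\lambda}$ vertex itself — but since $\overline{\lambda}$ is of type \texttt{SINGLE}, it has no internal edges, so every path entering it of length $\geq 1$ passes through the preceding pipe stage. So the plan is: (i) identify exactly which states can reach $\mathit{cl}_i|\overline{\lambda}$ by short paths; (ii) show those predecessor states are already inactive early enough; (iii) conclude by a witness argument of the same flavor as in Lemma~\ref{lem: growing inactivity}.

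**Key steps in order.** First I would unwind the timing: by the paragraph preceding the lemma, the $\kappa(\lambda)$-th tester column is deactivated at time $t=16n+\alpha_i+5$, which is one step \emph{earlier} than the generic tester deactivation at $16n+\alpha_i+6$ guaranteed by Claim~7 of Lemma~\ref{lem: growing inactivity}. This one-step advance is precisely what is needed: the edge $\mathit{pipes}_3|s_{\beta_i,k}\overset{a,b}{\longrightarrow}\overline{\mu'(k)}$ (when $\mu'(k)\in L_\phi$) means that for $\mathit{cl}_i|\overline{\lambda}$ to become active at some time $t$, the relevant tester column state must have been active at time $t-1-\beta_i-(\text{length of }\mathit{pipes}_3\text{ stages traversed})$, and I would check that the word $v_i$ is constructed (via the $u_{i,\lambda}=a^3$ versus $ba^2$ distinction) exactly so that each column's deactivation happens before the data that would otherwise reactivate $\overline{\lambda}$ arrives. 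Concretely: the relevant column feeds through $\mathit{pipes}_3$ which has length $\beta_i$, so a path of length $\beta_i+1$ into $\overline{\lambda}$ must originate in the column state at time $\gamma-1-\beta_i-1$ (and similarly $\gamma-\beta_i-1$), and I would verify $\gamma-\beta_i-2 \geq 16n+\alpha_i+5$ using $\gamma=12mn+4n-2m+9$, $\alpha_i=(i-1)(12n-2)$, $\beta_i=(m-i)(12n-2)$, so that $\alpha_i+\beta_i=(m-1)(12n-2)$ and the arithmetic closes. Then a witness $k=\beta_i+1$ (or the appropriate value) establishes membership in $\overline{S}_{\gamma-1}$, and an inductive $k=1$ witness pushes it to $\overline{S}_\gamma$, exactly as in the proof of Claim~8 above — using Lemma~\ref{lem: no abs out} to rule out the unspecified edges, since those only start at \texttt{ABS} $\mathit{out}$ states which are inactive throughout $\overline{S}_2\cap\dots\cap\overline{S}_\gamma$.

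**Main obstacle.** The routine part is the witness bookkeeping; the delicate part is confirming that the ``advance'' actually materializes — i.e. that for \emph{every} $\lambda\in L_\phi$, regardless of whether $u_{i,\lambda}=a^3$ or $u_{i,\lambda}=ba^2$, the $\kappa(\lambda)$-th tester column is genuinely empty by time $16n+\alpha_i+5$ and not merely by $16n+\alpha_i+6$. This hinges on the internal wiring of \texttt{INC($\lambda$)}/\texttt{NOTINC($\lambda$)} and on how the $\overline{\lambda}$ part interfaces with $\mathit{abs}_{\mu(\lambda)+2}$: note the second displayed link, $\overline{\lambda}\overset{a,b}{\longrightarrow}abs_{\mu(\lambda)+2}|\mathit{in}$, shows that once $\overline{\lambda}$ is hit it immediately dumps into an \texttt{ABS} part, which is how activity there ``slows down'' synchronization — so I must make sure that in the satisfying-assignment case this never happens. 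I expect the crux to be a careful reading of Figure~\ref{fig:tester} together with the $u_{i,\lambda}$ definition: the three \texttt{INC} levels correspond to $C_i$'s literals, and since the chosen $\overline{\lambda}_i$ is satisfied, $u_{i,\overline{\lambda}_i}=a^3$ behaves like the \texttt{NOTINC} case (no failure, no extra $b$), while the two unsatisfied-but-present literals get $ba^2$ supplying exactly the $b$ that \texttt{INC} demands — so no failure propagates, and the column clears on schedule. Verifying this ``no failure'' statement rigorously, by chasing the $(1,\cdot)\to(2,\cdot)\to(3,\cdot)$ transitions of \texttt{INC}/\texttt{NOTINC} under the relevant two-letter blocks, is where the real work lies.
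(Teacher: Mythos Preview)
Your proposal is correct and takes essentially the same approach as the paper: a witness argument tracing back from $\mathit{cl}_i|\overline{\lambda}$ through $\mathit{pipes}_3$ into the tester, combined with the three-case \texttt{INC}/\texttt{NOTINC} analysis under $u_{i,\lambda}\in\{a^3,ba^2\}$ that you correctly flag as the crux. The one refinement to note is that the paper's witness is $k=6n-3\kappa(\lambda)+\beta_i+1$, tracing back not merely to the entry of $\mathit{pipes}_3$ but all the way to the state $\mathit{cl}_i|\mathit{tester}|\mathit{level}_\lambda|(3,\lambda)$ --- this is the point at which the backward path becomes unique, and the case analysis (showing $s\in\overline{S}_{10n+\alpha_i+3\kappa(\lambda)+5}$) is carried out directly there rather than at the bottom of the tester.
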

\begin{proof}
For each such $\lambda$ we choose 
\[
k=6n-3\kappa\!\left(\lambda\right)+\beta_{i}+1
\]
 as a witness of $\mathit{cl}_{i}|\overline{\lambda}\in\overline{S}_{\gamma-1}$.
There is only one state where a path of length $k$ ending in $\overline{\lambda}$
starts: the state 
\[
s=\mathit{cl}_{i}|\mathit{tester}|\mathit{level}_{\lambda}|\left(3,\lambda\right).
\]
It holds that 
\[
s\in\overline{S}_{10n+\alpha_{i}+3\kappa\left(\lambda\right)+6}\cap\dots\cap\overline{S}_{\gamma},
\]
as is easily witnessed by $k'=3\kappa\!\left(\lambda\right)$ using
Claim 6 of Lemma \ref{lem: growing inactivity}. But we are going
to show also that 
\begin{equation}
s\in\overline{S}_{10n+\alpha_{i}+3\kappa\left(\lambda\right)+5},\label{eq: deactivate the column}
\end{equation}
which will imply that $k$ is a true witness of $\overline{\lambda}\in\overline{S}_{\gamma-1}$,
because
\[
\left(\gamma-1\right)-k=10n+\alpha_{i}+3\kappa\!\left(\lambda\right)+5.
\]
So let us prove the membership (\ref{eq: deactivate the column}).
We need to observe, using the definition of $w$, that:
\begin{itemize}
\item At time $2n+5$ the part $\mathit{pipe}_{6}$ records the sequence
\[
\mathbf{0},\mathbf{1},\xi_{1},\xi_{1},\xi_{2},\xi_{2},\dots,\xi_{n},\xi_{n}
\]
and the part $\mathit{pipe}_{7}$ records the sequence of inverted
values. Because 
\[
w_{2n+6}\dots w_{4n+7}=a^{2n+2},
\]
at time $4n+7$ the states $q,r'$ are active, the states $q',r$
are inactive and for each $j\in\left\{ 1,\dots,n\right\} $ it holds
that
\[
x_{j}\in\overline{S}_{4n+7}\Leftrightarrow y_{j}\in\overline{S}_{4n+7}\Leftrightarrow\neg x_{j}\in S_{4n+7}\Leftrightarrow z_{j}\in S_{4n+7}\Leftrightarrow\xi_{j}=\mathbf{1}.
\]
Because $w_{4n+8}=b$, at time $10n+\alpha_{i}+6$ we find the whole
structure above shifted to the first row of $\mathit{cl}_{i}|\mathit{tester}$,
so particularly for $\lambda\in L_{\phi}$:
\[
\mathit{cl}_{i}|\mathit{tester}|\mathit{level}_{x_{1}}|\left(1,\lambda\right)\in\overline{S}_{10n+\alpha_{i}+6}\Leftrightarrow\lambda\mbox{ is satisfied by }\xi_{1},\dots,\xi_{n}.
\]

\item From a simple induction on tester levels it follows that
\[
\mathit{cl}_{i}|\mathit{tester}|\mathit{level}_{\lambda}|\left(1,r\right)\in\overline{S}_{10n+\alpha_{i}+3\kappa\left(\lambda\right)+3}.
\]

\end{itemize}
Note that 
\[
w_{10n+\alpha_{i}+3\kappa\left(\lambda\right)+4}w_{10n+\alpha_{i}+3\kappa\left(\lambda\right)+5}w_{10n+\alpha_{i}+3\kappa\left(\lambda\right)+6}=u_{i,\lambda}
\]
and distinguish the following cases:
\begin{itemize}
\item If $\lambda=\overline{\lambda}_{i}$, we have $\lambda\in C_{i}$,
the part $\mathit{cl}_{i}|\mathit{tester}|\mathit{level}_{\lambda}$
is of type \texttt{INC($\lambda$)} and $u_{i,\lambda}=a^{3}$. We
also know that $\lambda$ is satisfied, so
\[
\mathit{cl}_{i}|\mathit{tester}|\mathit{level}_{x_{1}}|\left(1,\lambda\right)\in\overline{S}_{10n+\alpha_{i}+6}.
\]
The state above is the only state, from which any path of length $3\kappa\!\left(\lambda\right)-3$
leads to $\mathit{cl}_{i}|\mathit{tester}|\mathit{level}_{\lambda}|\left(1,\lambda\right)$,
so we deduce that
\[
\mathit{cl}_{i}|\mathit{tester}|\mathit{level}_{\lambda}|\left(1,\lambda\right)\in\overline{S}_{10n+\alpha_{i}+3\kappa\left(\lambda\right)+3}.
\]
We see that each path labeled by $a^{2}$ ending in $\mathit{cl}_{i}|\mathit{tester}|\mathit{level}_{\lambda}|\left(3,\lambda\right)$
starts in $\mathit{cl}_{i}|\mathit{tester}|\mathit{level}_{\lambda}|\left(1,\lambda\right)$
or in $\mathit{cl}_{i}|\mathit{tester}|\mathit{level}_{\lambda}|\left(1,r\right)$,
but each of the two states lies in $\overline{S}_{10n+\alpha_{i}+3\kappa\left(\lambda\right)+3}$.
So the membership (\ref{eq: deactivate the column}) holds.
\item If $\lambda\notin C$, the part $\mathit{cl}_{i}|\mathit{tester}|\mathit{level}_{\lambda}$
is of type \texttt{NOTINC($\lambda$)} and $u_{i,\lambda}=a^{3}$.
\linebreak{}
Particularly $w_{10n+\alpha_{i}+3\kappa\left(\lambda\right)+5}=a$
but no edge labeled by $a$ comes to \linebreak{}
$\mathit{cl}_{i}|\mathit{tester}|\mathit{level}_{\lambda}|\left(3,\lambda\right)$
and the membership (\ref{eq: deactivate the column}) follows trivially.
\item If $\lambda\neq\overline{\lambda}_{i}$ and $\lambda\in C_{i}$, the
part $\mathit{cl}_{i}|\mathit{tester}|\mathit{level}_{\lambda}$ is
of type \texttt{INC($\lambda$)} and $u_{i,\lambda}=ba^{2}$. Particularly
\[
w_{10n+\alpha_{i}+3\kappa\left(\lambda\right)+4}w_{10n+\alpha_{i}+3\kappa\left(\lambda\right)+5}=ba,
\]
but no path labeled by $ba$ comes to $\mathit{cl}_{i}|\mathit{tester}|\mathit{level}_{\lambda}|\left(3,\lambda\right)$,
so we reach the same conclusion as in the previous case.
\end{itemize}
We have proven that $\mathit{cl}_{i}|\overline{\lambda}$ lies in
$\overline{S}_{\gamma-1}$. From Claim 8 of Lemma \ref{lem: growing inactivity}
it follows directly that it lies also in $\overline{S}_{\gamma}$.
\end{proof}
We see that within $\mathit{cl}_{i}$ only states from the \texttt{ABS}
parts can lie in $S_{\gamma-1}$. Since $w_{\gamma-2}w_{\gamma-1}=a^{2}$,
no state $r_{1}$, $r_{2}$ or $\mathit{out}$ from any \texttt{ABS}
part lies in $S_{\gamma-1}$. Now we easily check that all the states
possibly present in $S_{\gamma-1}$ are mapped to $s_{2}$ by the
word $w_{\gamma}\dots w_{d}=b^{2}ab^{4n+m+7}$.

\subsection{From a Word to an Assignment.}

Since now we suppose that there is a reset word $w$ of length 
\[
d=12mn+8n-m+18.
\]
The following lemma is not hard to verify.

\begin{lem}
\leavevmode\label{lem: beginning of w}
\begin{enumerate}
\item \label{enu: unique merging path}Up to labeling there is a unique
pair of paths, both of a length $l\le d-2$, leading from $\mathit{cl}_{1}|\mathit{pipe}_{1}|s_{1}$
and $\mathit{cl}_{2}|\mathit{pipe}_{1}|s_{1}$ to a common end. They
are of length $d-2$ and meet in $s_{2}$.
\item \label{enu: start by aa}The word $w$ starts by $a^{2}$.
\end{enumerate}
\end{lem}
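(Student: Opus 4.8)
The plan is to prove the two parts of Lemma~\ref{lem: beginning of w} in order, deriving the second from the first. For part~\ref{enu: unique merging path}, I would analyze the structure of $G$ around the two parts $\mathit{cl}_1|\mathit{pipe}_1$ and $\mathit{cl}_2|\mathit{pipe}_1$, whose types are \texttt{PIPE($12mn+4n-2m+6$)}. A pipe is just a directed path in which every vertex has exactly one outgoing edge carrying \emph{both} labels $a$ and $b$; so starting at $\mathit{cl}_j|\mathit{pipe}_1|s_1$, the first $12mn+4n-2m+6$ steps of any walk are forced (up to labeling) and run down the pipe. I would then trace where the pipe exits lead in Figure~\ref{fig:wholeG}: following the global wiring, the two pipes feed into the shared tail of $G$ that terminates at $s_2$ via the segment realizing $b^2ab^{4n+m+7}$ (this is exactly the segment mentioned right before Figure~\ref{fig:wholeG}). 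The point is that once we are past the pipes, every vertex on the remaining common route again has a unique out-edge (it passes through \texttt{SINGLE} parts $q_k,r_k,r_k',s_1,s_2$ and the $\mathit{out}$-handling of \texttt{ABS} parts), so there is no branching and the two walks are forced to coincide and to have total length exactly $d-2$, meeting only at $s_2$. The ``up to labeling'' phrasing lets me ignore which of $a,b$ is used at each forced step, since at a pipe vertex both labels do the same thing.

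For part~\ref{enu: start by aa}, I would argue by contradiction from the fact that $w$ is a reset word of length exactly $d$. Since $w$ synchronizes, in particular it merges the two states $\mathit{cl}_1|\mathit{pipe}_1|s_1$ and $\mathit{cl}_2|\mathit{pipe}_1|s_1$: there is some $t\le d$ such that $\delta(\mathit{cl}_1|\mathit{pipe}_1|s_1, w_1\dots w_t) = \delta(\mathit{cl}_2|\mathit{pipe}_1|s_1, w_1\dots w_t)$, and the two length-$t$ prefixes of $w$ trace a pair of paths from these two vertices to a common end. By part~\ref{enu: unique merging path}, any such pair of paths has length exactly $d-2$ and no shorter pair exists, so in fact $t \ge d-2$, i.e. the two states are not yet merged before time $d-2$. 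But $w$ must merge them by time $d$ at the latest, and the unique merging pair has length $d-2$; hence the prefix $w_1\dots w_{d-2}$ must be (a labeling of) that unique pair of paths, and then the remaining two letters $w_{d-1}w_d$ act inside $s_2$ (which is fine). The key extraction is that since the \emph{first two} letters of $w$ are the first two steps down each pipe $\mathit{cl}_j|\mathit{pipe}_1$, and those pipe vertices — by the Eulerian/complete structure and the explicit edges in Figure~\ref{fig:wholeG} — can only be \emph{entered} along edges that force the prefix to begin with $a^2$ (equivalently: a length-$(d-2)$ path from $\mathit{cl}_j|\mathit{pipe}_1|s_1$ to $s_2$ that is compatible with some reset word of length $d$ is impossible unless the word opens with $a^2$). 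Concretely I would identify the unique predecessors feeding these two distinguished source vertices and check that the only way the global merging path of length $d-2$ fits inside a word of length $d$ is to have $a^2$ at the front, every other choice either creating an extra unwanted branching or a path too long.

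The main obstacle I anticipate is the second part: part~\ref{enu: unique merging path} is essentially a combinatorial fact about the digraph $G$ (count out-degrees along the forced route, observe the pipes are long enough that the two routes must fuse before reaching $s_2$), whereas pinning down \emph{why} this forces $w$ to begin with $a^2$ requires carefully reading off from Figure~\ref{fig:clause} and Figure~\ref{fig:wholeG} which edges are labeled $a$ versus $b$ at the very top of $\mathit{cl}_1|\mathit{pipe}_1$ and $\mathit{cl}_2|\mathit{pipe}_1$, and relating the length budget $d$ to the forced length $d-2$. I would handle this by noting that the slack is exactly $2$: the unique merging pair has length $d-2$, a reset word has length $d$, so only two ``free'' letters remain, and the structure of $G$ near the tops of these pipes (together with Lemma~\ref{lem: no abs out}-type reasoning about \texttt{ABS} $\mathit{out}$-states not being reachable early) forces those two free letters to be the initial $a^2$ rather than, say, being spent at the end — because a reset word must also keep \emph{all other} states moving toward $s_2$ on schedule, and any $b$ among the first two letters would, via the \texttt{ABS} barriers and the $b^2$-avoidance design, make some state lag and prevent synchronization within $d$ steps. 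This coupling between the ``pipe-merging'' bound and the global timing is the delicate point; everything else is routine out-degree bookkeeping in the figures.
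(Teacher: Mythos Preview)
Your plan for part~\ref{enu: unique merging path} has a factual error: the claim that ``once we are past the pipes, every vertex on the remaining common route again has a unique out-edge'' is false. After $\mathit{cl}_j|\mathit{pipe}_1$ the route enters $\mathit{cl}_j|\mathit{abs}_1|\mathit{in}$, and the \texttt{ABS} template branches (from $\mathit{in}$, the $a$-edge and the $b$-edge go to different states; escaping \texttt{ABS} through $\mathit{out}$ requires a $b^2$ factor). Likewise the $\mathit{sp}_k$ states branch: $b$ goes to $\mathit{sp}_{k+1}$ but $a$ goes elsewhere (e.g.\ $\mathit{sp}_{4n+6}\overset{a}{\to}\mathit{pipe}_1|s_1$). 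The paper does not claim forced out-edges here; it explicitly writes down the route through $\mathit{abs}_1$ and the $\mathit{sp}$ chain, then through the $q_k$ states to $s_2$, and argues that no shorter merging is possible. Your ``out-degree bookkeeping'' would have to become an actual length argument at these branch points.

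Your plan for part~\ref{enu: start by aa} has a more serious gap. The merging-path argument cannot constrain $w_1w_2$ at all: the first two steps from $\mathit{cl}_j|\mathit{pipe}_1|s_1$ are along pipe edges, where \emph{both} labels give the same successor, so the unique (up to labeling) path is compatible with $w_1w_2\in\{aa,ab,ba,bb\}$ equally. The ``slack is exactly $2$'' observation therefore places the free letters at the \emph{end}, not the beginning, and yields nothing about the prefix. The paper's proof is genuinely different: it assumes $w_1w_2\neq a^2$, observes that then $\mathit{cl}_i|\mathit{sp}_3,\dots,\mathit{cl}_i|\mathit{sp}_{4n+6}\in S_2$, and does a case split on the first $a$ after $w_1w_2$. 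If that $a$ comes soon ($k\le 4n+3$), the state $\mathit{cl}_i|\mathit{pipe}_1|s_1$ is \emph{re}-activated at a time $>2$, and part~\ref{enu: unique merging path} then gives a contradiction with $|w|=d$; if it comes late ($k\ge 4n+4$), states inside $\mathit{cl}_i|\mathit{forcer}$ are activated with too little time left to merge. Your closing paragraph gestures toward ``some state lags,'' but the specific states (the $\mathit{sp}$ chain, and its feeding back into $\mathit{pipe}_1|s_1$) and the case analysis are the actual content of the argument, and they are missing from your plan.
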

\begin{proof}
\leavevmode
\begin{enumerate}
\item The leading segments of both paths are similar since they stay within
the parts $\mathit{cl}_{1}$ and $\mathit{cl}_{2}$:
\begin{eqnarray*}
 & \mathit{pipe}_{1}|s_{1}\overset{a,b}{\longrightarrow}\dots\overset{a,b}{\longrightarrow}\mathit{pipe}_{1}|s_{12mn+4n-2m+6}\overset{a,b}{\longrightarrow}\mathit{abs}_{1}|\mathit{in}\overset{b}{\longrightarrow}\,\hspace{60bp}\\
 & \hspace{60bp}\overset{b}{\longrightarrow}\mathit{abs}_{1}|r_{1}\overset{b}{\longrightarrow}\mathit{abs}_{1}|\mathit{out}\overset{a}{\longrightarrow}\mathit{sp}_{1}\overset{b}{\longrightarrow}\dots\overset{b}{\longrightarrow}\mathit{sp}_{4n+6}.
\end{eqnarray*}
Once the paths leave the parts $\mathit{cl}_{1}$ and $\mathit{cl}_{2}$,
the shortest way to merge is the following:
\[
\begin{array}{lccclccc}
\mathit{cl}_{1}|\mathit{sp}_{4n+6} & \hspace{-6bp}\overset{b}{\longrightarrow}q_{1} & \hspace{-6bp}\overset{b}{\longrightarrow}q_{2} & \hspace{-6bp}\overset{b}{\longrightarrow}\,\,\,\,\,\dots & \overset{b}{\longrightarrow}q_{m-1} & \hspace{-6bp}\overset{b}{\longrightarrow}q_{m} & \hspace{-6bp}\overset{b}{\longrightarrow}s_{1} & \hspace{-6bp}\overset{b}{\longrightarrow}\\
\mathit{cl}_{2}|\mathit{sp}_{4n+6} & \hspace{-6bp}\overset{b}{\longrightarrow}q_{2} & \hspace{-6bp}\overset{b}{\longrightarrow}q_{3} & \hspace{-6bp}\overset{b}{\longrightarrow}\,\,\,\,\,\dots & \overset{b}{\longrightarrow}q_{m} & \hspace{-6bp}\overset{b}{\longrightarrow}s_{1} & \hspace{-6bp}\overset{b}{\longrightarrow}s_{2} & \hspace{-6bp}\overset{b}{\longrightarrow}
\end{array}\, s_{2}
\]
Having the description above it is easy to verify that the length
is $d-2$ and there is no way to make the paths shorter.
\item Suppose that $w_{1}w_{2}\neq a^{2}$. Any of the three possible values
of $w_{1}w_{2}$ implies that 
\[
\left\{ \mathit{cl}_{i}|\mathit{sp}_{3},\dots,\mathit{cl}_{i}|\mathit{sp}_{4n+6}\right\} \subseteq S_{2}
\]
 for each $i$. It cannot hold that $w=w_{1}w_{2}b^{d-2}$, because
in such case all $\mathit{cl}_{\dots}|\mathit{cca}|s_{b}$ states
would be active in any time $t\geq3$. So the word $w$ has a prefix
$w_{1}w_{2}b^{k}a$ for some $k\geq0$. If $k\leq4n+3$, it holds
that $\mathit{cl}_{i}|\mathit{sp}_{4n+6}\in S_{k+2}$ and therefore
$\mathit{cl}_{i}|\mathit{pipe}_{1}|s_{1}\in S_{k+3}$, which contradicts
the first claim. Let $k\geq4n+4$. Some state of a form $\mathit{cl}_{i}|\mathit{forcer}|q_{1,\dots}$
or $\mathit{cl}_{i}|\mathit{forcer}|r_{1,\dots}$ lies in $S_{k+2}$
for each $i$. This holds particularly for $i=1$ and $i=2$, but
there is no pair of paths of length at most 
\[
d-\left(4n+4\right)\geq d-k
\]
 leading from such two states to a common end.
\end{enumerate}
\end{proof}
The second claim implies that $\mathit{cl}_{i}|\mathit{pipe}_{1}|s_{1}\in S_{2}$
for each $i\in\left\{ 1,\dots,m\right\} $, so it follows that
\[
\delta\left(Q,w\right)=\left\{ s_{2}\right\} .
\]
Let us denote 
\[
\overline{d}=12mn+4n-2m+11
\]
 and 
\[
\overline{w}=w_{1}\dots w_{\overline{d}}.
\]
The following lemma holds because no edges labeled by $a$ are available
for final segments of the paths described in the first claim of Lemma
\ref{lem: beginning of w}.
\begin{lem}
\label{lem: end of w}~
\begin{enumerate}
\item \label{enu: final segment only b}The word $w$ can be written as
$w=\overline{w}b^{4n+m+7}$ for some word $\overline{w}$.
\item \label{enu: only sp}For any $t\geq\overline{d}$, no state from any
$\mathit{cl}_{\dots}$ part lie in $S_{t}$, except for the $\mathit{sp}_{\dots}$
states.
\end{enumerate}
\end{lem}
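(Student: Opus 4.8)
Here is a plan for proving Lemma~\ref{lem: end of w}.

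The plan is to derive both parts from the uniqueness of the ``merging'' pair of paths supplied by Lemma~\ref{lem: beginning of w}(\ref{enu: unique merging path}), together with the facts $w_1w_2=a^2$ (Lemma~\ref{lem: beginning of w}(\ref{enu: start by aa})) and $\delta(Q,w)=\{s_2\}$ already established above.

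For part~(\ref{enu: final segment only b}) I would argue as follows. From $w_1w_2=a^2$ we get, as already noted, $\mathit{cl}_1|\mathit{pipe}_1|s_1,\mathit{cl}_2|\mathit{pipe}_1|s_1\in S_2$; since $\delta(Q,w)=\{s_2\}$, the word $w_3\dots w_d$ of length exactly $d-2$ drives each of these two states to $s_2$. This exhibits a pair of paths of length $\le d-2$ with a common end, hence by Lemma~\ref{lem: beginning of w}(\ref{enu: unique merging path}) it is \emph{the} pair described there, so $w_3\dots w_d$ must carry the labeling read off from that description. The terminal block of that labeling --- the segment $\mathit{sp}_1\to\dots\to\mathit{sp}_{4n+6}$ followed by $\mathit{sp}_{4n+6}\to q_1\to\dots\to q_m\to s_1\to s_2$ --- comprises $(4n+5)+(m+2)=4n+m+7$ edges, all labeled $b$, whereas the edge immediately before it, $\mathit{abs}_1|\mathit{out}\overset{a}{\longrightarrow}\mathit{sp}_1$, carries $a$. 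Thus $w$ ends in exactly $b^{4n+m+7}$, and since $d-(4n+m+7)=\overline{d}$ this says precisely that $w=\overline{w}b^{4n+m+7}$ with $\overline{w}=w_1\dots w_{\overline{d}}$.

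For part~(\ref{enu: only sp}) I would fix $t\ge\overline{d}$ and invoke part~(\ref{enu: final segment only b}): the unconsumed suffix $w_{t+1}\dots w_d=b^{d-t}$ is a block of $b$'s, so every $s\in S_t$ satisfies $\delta(s,b^{d-t})=s_2$, i.e.\ iterating the unique $b$-edge from $s$ eventually lands in $s_2$. It then suffices to check, by inspecting the $b$-edges of the construction, that the only states of a \texttt{CLAUSE($i$)} part $\mathit{cl}_i$ with this property are the $\mathit{cl}_i|\mathit{sp}_j$: from $\mathit{cl}_i|\mathit{sp}_j$ the $b$-orbit runs $\mathit{sp}_j\to\dots\to\mathit{sp}_{4n+6}$, leaves $\mathit{cl}_i$, and descends $q_i\to\dots\to q_m\to s_1\to s_2$; from every other state of $\mathit{cl}_i$ --- inside a pipe, a \texttt{CCA}/\texttt{CCI}/\texttt{PIPES} part, a \texttt{FORCER}, \texttt{LIMITER} or \texttt{TESTER} part, at a vertex $\nu$ or $\overline{\lambda}$, or inside an \texttt{ABS} part --- the $b$-orbit never leaves $\mathit{cl}_i$, is absorbed into some \texttt{ABS} part and cycles there indefinitely, because the only edge leaving an \texttt{ABS} part toward the $\mathit{sp}$-chain, namely $\mathit{out}\overset{a}{\longrightarrow}\mathit{sp}_{\dots}$, is labeled $a$ rather than $b$. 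Hence $S_t$ contains no state of a $\mathit{cl}_{\dots}$ part outside its $\mathit{sp}$-vertices.

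I expect the only real work to be that last $b$-edge inspection: one has to confirm that every ``downward'' $b$-edge of \texttt{CLAUSE($i$)} --- in particular those out of $\mathit{pipes}_3$ and out of the $\overline{\lambda}$ vertices, which by their defining equations go to some $\mathit{abs}_{\dots}|\mathit{in}$ --- really enters an \texttt{ABS} part and never re-emerges into the $\mathit{sp}$-chain, so that the \texttt{ABS} barrier is impermeable to $b$-paths heading for $s_2$. This is exactly the point flagged by the remark just before the lemma (``no edges labeled by $a$ are available for final segments of the paths''); the rest is routine bookkeeping over the explicit edge lists.
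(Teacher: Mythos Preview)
Your proposal is correct and follows essentially the same route as the paper: both parts are deduced from Lemma~\ref{lem: beginning of w} by noting that $w_3\dots w_d$ must label the unique merging pair of paths, whose last $4n+m+7$ edges are $b$-edges, and then that $S_t$ (for $t\ge\overline{d}$) is contained in the set $S'=\{s\mid(\exists k)\,\delta(s,b^k)=s_2\}$. Your write-up is more explicit than the paper's (which simply defines $S'$ and calls part~(\ref{enu: only sp}) ``easy to observe''); the only caution is your phrase ``is absorbed into some \texttt{ABS} part and cycles there indefinitely'' --- since the $b$-edge out of each $\mathit{abs}_k|\mathit{out}$ is one of the unspecified edges of $G$, the $b$-orbit may in fact re-enter the pipes and tester before looping, but your essential point (that it never reaches the $\mathit{sp}$-chain, because the sole edge into that chain from the interior is the $a$-labeled $\mathit{abs}_k|\mathit{out}\to\mathit{sp}_k$) stands.
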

\begin{proof}
\leavevmode
\begin{enumerate}
\item Let us write $w=w_{1}w_{2}w'$. From Lemma \ref{lem: beginning of w}
it follows that
\[
\delta\left(\mathit{cl}_{1}|\mathit{pipe}_{1}|s_{1},w'\right)=\delta\left(\mathit{cl}_{2}|\mathit{pipe}_{1}|s_{1},w'\right)
\]
and $w'$ have to label some of the paths determined up to labeling
in Lemma \ref{lem: beginning of w}{\footnotesize{(1)}}. The final
$4n+m+7$ edges of the paths lead from $\mathit{cl}_{1}|\mathit{sp}_{1}$
and $\mathit{cl}_{2}|\mathit{sp}_{1}$ to $s_{2}$. All the transitions
used here are necessarily labeled by $b$.
\item The claim is easy to observe, since the first claim implies that $S_{t}$
is a subset of
\[
S'=\left\{ s\in Q\mid\left(\exists d\in\mathbb{N}\right)\delta\left(s,b^{d}\right)=s_{2}\right\} .
\]

\end{enumerate}
\end{proof}

The next lemma is based on properties of the parts $\mathit{cl}_{\dots}|\mathit{forcer}$
but to prove that no more $a$ follows the enforced factor $a^{2n+1}$
we also need to observe that each $\mathit{cl}_{\dots}|\mathit{cca}|out$
or each $\mathit{cl}_{\dots}|\mathit{cci}|\mathit{out}$\textbf{ }lies
in $S_{2n+4}$.
\begin{lem}
\label{lem: forcing plus b}The word $\overline{w}$ starts by $\overline{u}a^{2n+1}b$
for some $\overline{u}$ of length $2n+6$.\end{lem}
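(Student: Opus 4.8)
I would prove the two halves of the statement in turn: first that $w_{2n+7}\cdots w_{4n+7}=a^{2n+1}$, then that $w_{4n+8}=b$; since $4n+8\le\overline{d}$, putting $\overline{u}=w_{1}\cdots w_{2n+6}$ (a word of length $2n+6$) then yields exactly the claimed prefix of $\overline{w}$. Throughout, the only hard data are $w_{1}w_{2}=a^{2}$ (Lemma~\ref{lem: beginning of w}(\ref{enu: start by aa})) together with Lemmas~\ref{lem: beginning of w} and~\ref{lem: end of w}; the letters $w_{3},\dots,w_{2n+6}$ stay unknown. Fix any $i$. Since there is no $a^{2}$-labelled path ending in an $\mathit{sp}_{\dots}$ state, all $\mathit{sp}_{\dots}$ states of $\mathit{cl}_{i}$ are inactive from time $2$ on, and this front of inactivity propagates down through $\mathit{pipe}_{2},\mathit{pipe}_{3},\mathit{pipe}_{4}$, through $\mathit{cca}$, $\mathit{cci}$, $\mathit{pipe}_{5}$, and into $\mathit{cl}_{i}|\mathit{forcer}$. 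The key point is that, whatever $w_{3},\dots,w_{2n+5}$ are, the part $\mathit{cl}_{i}|\mathit{forcer}$ still carries an active state of the form $q_{1,0}$ or $r_{1,0}$ at time $2n+5$: if those letters do not form a run of $a$'s long enough to flush the forcer it is simply not yet emptied, and if they do, then \texttt{CCI} (reacting to equal consecutive letters) keeps its $\mathit{out}$ state active and reloads it through $\mathit{pipe}_{5}$ --- the complementary behaviour of \texttt{CCA} and \texttt{CCI} guarantees a reload in every case.

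Now I would invoke the defining property of \texttt{FORCER}. Since $w$ is a reset word of length $d$, the active forcer state of $\mathit{cl}_{i}$ must reach $s_{2}$ within the remaining $d-(2n+5)$ letters. One checks that a shortest route from such a state to $s_{2}$ has length exactly $d-(2n+5)$ and leaves the forcer only after $2n+2$ consecutive applications of $a$; any $b$ among $w_{2n+6},\dots,w_{4n+7}$ would divert the active descendant onto one of the long internal detours of \texttt{FORCER}, after which $s_{2}$ is no longer reachable within the budget. Hence $w_{2n+6}\cdots w_{4n+7}=a^{2n+2}$, in particular $w_{2n+7}\cdots w_{4n+7}=a^{2n+1}$, which is the first half.

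For the second half, suppose towards a contradiction that $w_{4n+8}=a$, so $w_{2n+6}\cdots w_{4n+8}=a^{2n+3}$. By the observation that at least one of $\mathit{cl}_{i}|\mathit{cca}|\mathit{out}$, $\mathit{cl}_{i}|\mathit{cci}|\mathit{out}$ lies in $S_{2n+4}$ --- immediate from the definitions of the two templates and $w_{1}w_{2}=a^{2}$ --- I would trace that active state downwards: it traverses $\mathit{pipe}_{6}$ or $\mathit{pipe}_{7}$ and, exactly as in the forward construction, appears as an active state $\nu\in M_{\phi}$ of $\mathit{cl}_{i}$ at time $4n+7$. Under $w_{4n+8}=b$ it would move straight into $\mathit{pipes}_{1}$ on schedule; under $w_{4n+8}=a$ it is shifted instead to $\mu'\!\left(\mu(\nu)+1\right)$ along the chain of $\nu$-parts, so it can enter $\mathit{pipes}_{1}$ only at least one application later than required, and it then remains late through $\mathit{pipes}_{1},\mathit{pipes}_{2}$, the tester, $\mathit{pipes}_{3}$, the barrier of \texttt{ABS} parts and the chain $q_{1}\to\cdots\to s_{2}$. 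Since on schedule this route reaches $s_{2}$ at exactly time $d$ (from time $4n+7$ on it coincides with the tight routes of the $\nu$-states), the delay makes it impossible to synchronise the state active at time $2n+4$ within $w$, contradicting that $w$ is a reset word. Therefore $w_{4n+8}=b$.

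The main obstacle is the quantitative bookkeeping that turns "one step too slow" into a genuine contradiction: one must verify that $\mathit{cl}_{i}|\mathit{forcer}$ is loaded precisely at time $2n+5$ regardless of $w_{3},\dots,w_{2n+5}$, that the \texttt{CCA}/\texttt{CCI} signal is really present in $S_{2n+4}$, and --- most delicately --- that the route from the loaded forcer state and the route from that signal are each exactly as long as the budget left by $|w|=d$, so that a single $b$ in positions $2n+6,\dots,4n+7$ or a single $a$ in position $4n+8$ already overshoots $d$. Pinning these lengths down is where Lemma~\ref{lem: beginning of w}(\ref{enu: unique merging path}) (the unique merging path, of length $d-2$) and Lemma~\ref{lem: end of w} (the trailing $b^{4n+m+7}$ and the absence of non-$\mathit{sp}$ active states past $\overline{d}$) have to be combined with an exact count of the contributions of $\mathit{pipe}_{1}$, the \texttt{ABS} barrier, the paddings $\mathit{pipes}_{1},\mathit{pipes}_{2},\mathit{pipes}_{3}$ (whose combined length is the $i$-independent value $(m-1)(12n-2)$) and the $q_{\dots}$ and $s_{\dots}$ chains of $G$.
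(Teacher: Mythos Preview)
Your two-part plan---use $\mathit{cl}_i|\mathit{forcer}$ to force the $a$-block and then the \texttt{CCA}/\texttt{CCI} dichotomy to force the trailing $b$---is exactly the paper's. For the second half you argue just as the paper does; the only refinement is that the paper names the active state precisely as $\mathit{cl}_1|q$ or $\mathit{cl}_1|r$ (the head of the $M_\phi$-chain), which automatically excludes the edge case $\nu=\neg x_n$ in which your shift $\mu'(\mu(\nu)+1)$ would be undefined.

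In the first half your conclusion is right but the supporting mechanism is not. The ``reload through $\mathit{pipe}_5$ via \texttt{CCI}'' story does not match the automaton: $\mathit{pipe}_5$ sits \emph{parallel} to $\mathit{cca}$ and $\mathit{cci}$ (all three are fed by $\mathit{pipe}_2,\mathit{pipe}_3,\mathit{pipe}_4$; cf.\ the grouping in Lemma~\ref{lem: growing inactivity}), while the \texttt{CCA}/\texttt{CCI} outputs go to $\mathit{pipe}_6,\mathit{pipe}_7$, not to the forcer. So your case split ``either the forcer is not yet flushed, or \texttt{CCI} reloads it'' does not go through as written. The paper avoids any case analysis on $w_3,\dots,w_{2n+6}$ altogether: from $w_1w_2=a^2$ one gets $\mathit{cl}_1|\mathit{pipe}_2|s_1\in S_2$ directly, and simply tracing that single active token forward (the intermediate parts are pipes, so the letters are irrelevant) lands it at $\mathit{cl}_1|\mathit{forcer}|q_{1,0}$ or $r_{1,0}$ at time $2n+6$. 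With that timing the forcer pins down $w_{2n+7}\cdots w_{4n+7}=a^{2n+1}$, which is exactly the statement; your stronger $w_{2n+6}\cdots w_{4n+7}=a^{2n+2}$ is neither established nor needed.
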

\begin{proof}
At first we prove that $\overline{w}$ starts by $\overline{u}a^{2n+1}$.
Lemma \ref{lem: beginning of w}{\footnotesize{(2)}} implies that
$\mathit{cl}_{1}|\mathit{pipe}_{2}|s_{1}\in S_{2}$, so obviously
some of the states $\mathit{cl}_{1}|\mathit{forcer}|q_{1,0}$ and
$\mathit{cl}_{1}|\mathit{forcer}|r_{1,0}$ lies in $S_{2n+6}$. If
$w_{2n+6+k}=b$ for some $k\in\left\{ 1,\dots,2n+1\right\} $, it
holds that $cl_{i}|\mathit{forcer}|q_{k,2}$ or $cl_{i}|\mathit{forcer}|r_{k,2}$
lies in $S_{2n+6+k}$. From such state no path of length at most $2n+3-k$
leads to $\mathit{cl}_{i}|\mathit{pipe}_{8}|s_{1}$ and therefore
no path of length at most \textbf{
\[
\left(2n+3-k\right)+\left(\alpha_{i}+6n-1\right)+\left(6n-2\right)+\beta_{i}+3=\overline{d}-\left(2n+6+k\right)
\]
}leads into $S'$, which contradicts Lemma \ref{lem: end of w}{\footnotesize{(2)}}.
It remains to show that there is $b$ after the prefix $\overline{u}a^{2n+1}$.
Lemma \ref{lem: beginning of w}{\footnotesize{(2)}} implies that
both $\mathit{cl}_{1}|\mathit{cca}|in$ and $\mathit{cl}_{1}|\mathit{cci}|\mathit{in}$\textbf{
}lie in $S_{2n+1}$, from which it is not hard to deduce that $\mathit{cl}_{1}|\mathit{cca}|out$
or $\mathit{cl}_{1}|\mathit{cci}|\mathit{out}$\textbf{ }lies in $S_{2n+4}$
and therefore $\mathit{cl}_{1}|q$ or $\mathit{cl}_{1}|r$ lies in
$S_{4n+7}$. Any path of length $\overline{d}-\left(4n+7\right)$
leading from $\mathit{cl}_{1}|q$ or $\mathit{cl}_{1}|r$ into $\overline{S}$
starts by an edge labeled by $b$.
\end{proof}
Now we are able to write the word $\overline{w}$ as
\[
\overline{w}=\overline{u}a^{2n+1}b\left(\overline{v}_{1}v'_{1}c_{1}\right)\dots\left(\overline{v}_{m}v'_{m}c_{m}\right)w_{\overline{d}-2}w_{\overline{d}-1}w_{\overline{d}},
\]
where $\left|\overline{v}_{k}\right|=6n-2$, $\left|v'_{k}\right|=6n-1$
and $\left|c_{k}\right|=1$ for each $k$ and denote $d_{i}=10n+\alpha_{i}+6$.
At time $2n+5$ the parts $\mathit{cl}_{\dots}|\mathit{pipe}_{6}$
and $\mathit{cl}_{\dots}|\mathit{pipe}_{7}$ record mutually inverse
sequences. Because there is the factor $a^{2n+1}$ after $\overline{u}$,
at time $d_{i}$ we find the information pushed to the first rows
of testers:

\begin{lem}
\label{prop: tester dichotomy}For each $i\in\left\{ 1,\dots,m\right\} $,
$j\in\left\{ 1,\dots,n\right\} $ it holds that\textup{
\begin{eqnarray*}
\mathit{cl}_{i}|\mathit{tester}|\mathit{level}_{x_{1}}|\left(1,x_{j}\right)\in S_{d_{i}} & \Leftrightarrow\\
\mathit{cl}_{i}|\mathit{tester}|\mathit{level}_{x_{1}}|\left(1,\neg x_{j}\right)\notin S_{d_{i}} & \Leftrightarrow & w_{2n-2j+2}\neq w_{2n-2j+3}.
\end{eqnarray*}
}\end{lem}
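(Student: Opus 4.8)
The plan is to follow, for fixed $i$ and $j$, a single bit of information from the coder parts $\mathit{cl}_i|\mathit{cca}$ and $\mathit{cl}_i|\mathit{cci}$ down through the pipes $\mathit{pipe}_6,\mathit{pipe}_7$, then along the chain of states $\mathit{cl}_i|\nu$ ($\nu\in M_\phi$), and finally through the pipe batteries $\mathit{pipes}_1,\mathit{pipes}_2$ into the first row of the tester, checking that it arrives there exactly at time $d_i$; this mirrors the reasoning used in the proof of Lemma~\ref{lem: no lambda bar}. First I would collect what is already known about $w$: Lemma~\ref{lem: beginning of w}(\ref{enu: start by aa}) gives $w_1w_2=a^2$, and Lemma~\ref{lem: forcing plus b} gives that $w$ starts with $\overline u\,a^{2n+1}b$ with $|\overline u|=2n+6$, so $w_{2n+7}\dots w_{4n+7}=a^{2n+1}$ and $w_{4n+8}=b$; moreover, from the proof of Lemma~\ref{lem: forcing plus b}, the states $\mathit{cl}_i|\mathit{cca}|\mathit{in}$ and $\mathit{cl}_i|\mathit{cci}|\mathit{in}$ are active throughout the window used below. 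Note also that the positions $2n-2j+2,2n-2j+3$ in the statement run over $\{2,\dots,2n+1\}$, i.e.\ over the ``coding'' positions of $w$.

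The core step concerns the coders. Using the wiring of \texttt{CCA} and \texttt{CCI} (Figure~\ref{fig:cca_cci_abs}) and the activity of their $\mathit{in}$ states, I would show that for every $t$ with $3\le t\le 2n+1$,
\[
\mathit{cl}_i|\mathit{cca}|\mathit{out}\in S_t\iff w_{t-1}\neq w_t,\qquad \mathit{cl}_i|\mathit{cci}|\mathit{out}\in S_t\iff w_{t-1}=w_t .
\]
Since $\mathit{pipe}_6$ (fed by $\mathit{cca}|\mathit{out}$) and $\mathit{pipe}_7$ (fed by $\mathit{cci}|\mathit{out}$) are shift registers of length $2n+2$, at time $2n+5$ the $k$-th cell of $\mathit{pipe}_6$ carries the activity that $\mathit{cca}|\mathit{out}$ had at time $2n+5-k$, and symmetrically for $\mathit{pipe}_7$; hence the two pipes record mutually inverse sequences at time $2n+5$, and, setting $k=2j+2$,
\[
\mathit{pipe}_6|s_{2j+2}\in S_{2n+5}\iff w_{2n-2j+2}\neq w_{2n-2j+3},\qquad \mathit{pipe}_7|s_{2j+2}\in S_{2n+5}\iff w_{2n-2j+2}=w_{2n-2j+3}.
\]

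Next I would transport these bits forward. Because $w_{2n+7}\dots w_{4n+7}=a^{2n+1}$ and the states $\mathit{cl}_i|\nu$ form a single chain traversed by $a$ in increasing order of $\mu$, a shift count shows that at time $4n+7$ the cell $\mathit{pipe}_6|s_{2j+2}$ has reached $\mathit{cl}_i|\mu'(2j+2)=\mathit{cl}_i|x_j$ and the cell $\mathit{pipe}_7|s_{2j+2}$ has reached $\mathit{cl}_i|\mu'(2n+4+2j)=\mathit{cl}_i|\neg x_j$ (one checks that $w_{4n+8}=b$ arrives exactly when all $4n+4$ chain positions are loaded, so nothing overflows from the first half into the second), whence $\mathit{cl}_i|x_j\in S_{4n+7}\iff w_{2n-2j+2}\neq w_{2n-2j+3}$ and $\mathit{cl}_i|\neg x_j\in S_{4n+7}\iff w_{2n-2j+2}=w_{2n-2j+3}$. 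The letter $w_{4n+8}=b$ then sends $\mathit{cl}_i|\nu$ to $\mathit{pipes}_1|s_{1,\mu(\nu)}$ (directly to $\mathit{pipes}_2$ if $\alpha_i=0$), and the next $\alpha_i+(6n-2)$ letters merely shift this content through $\mathit{pipes}_1$, then $\mathit{pipes}_2$, and into the tester via $\mathit{pipes}_2|s_{6n-2,k}\to\mathit{tester}|\mathit{level}_{x_1}|(1,\mu'(k))$; counting, the bit held by $\mathit{cl}_i|\nu$ at time $4n+7$ lands in $\mathit{cl}_i|\mathit{tester}|\mathit{level}_{x_1}|(1,\nu)$ at time $4n+8+\alpha_i+(6n-2)=10n+\alpha_i+6=d_i$. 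Instantiating $\nu=x_j$ and $\nu=\neg x_j$ gives the outer two equivalences of the statement, and combining them gives $\mathit{cl}_i|\mathit{tester}|\mathit{level}_{x_1}|(1,x_j)\in S_{d_i}\iff\mathit{cl}_i|\mathit{tester}|\mathit{level}_{x_1}|(1,\neg x_j)\notin S_{d_i}$.

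The step I expect to be the real obstacle is the coder analysis: upgrading the informal ``\texttt{CCA} out is inactive if (and typically only if) the two letters were equal'' into the exact biconditionals above for all $t\in\{3,\dots,2n+1\}$. This requires reading off the internal transitions of \texttt{CCA}/\texttt{CCI}, verifying that the $\mathit{in}$ states stay active long enough, and ruling out spurious activity reaching $\mathit{cca}|\mathit{out}$ or $\mathit{cci}|\mathit{out}$ either during the first one or two steps (where $w_1w_2=a^2$ is exactly what saves us) or along the Eulerian-completion edges that the construction deliberately leaves unspecified. A secondary nuisance is the two ``transition'' letters $w_{2n+5},w_{2n+6}$: whatever their values, one must check that a possible premature dump of the $\mathit{cl}_i|\nu$ states into $\mathit{pipes}_1$ is carried so far down the tester stack that it cannot reach the first tester row by time $d_i$, and hence is irrelevant to the statement. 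Beyond that, everything is routine shift-register bookkeeping.
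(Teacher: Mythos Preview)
Your approach is essentially the same as the paper's: analyse the \texttt{CCA}/\texttt{CCI} outputs to get the mutually inverse sequences recorded in $\mathit{pipe}_6$ and $\mathit{pipe}_7$ at time $2n+5$, use $w_{2n+7}\dots w_{4n+7}=a^{2n+1}$ to push these into the states $\mathit{cl}_i|\nu$ by time $4n+7$, and then use $w_{4n+8}=b$ followed by the $\alpha_i+(6n-2)$ shift through $\mathit{pipes}_1,\mathit{pipes}_2$ to reach the first tester row at time $d_i$. The paper's own proof is much terser (it simply asserts the recorded sequences ``from the definition of \texttt{CCA} and \texttt{CCI}'', then says ``whatever the letter $w_{2n+6}$ is, Lemma~\ref{lem: forcing plus b} implies\ldots''), so your write-up is effectively an unpacking of the same argument rather than a different one; the issues you flag about the transition letters $w_{2n+5},w_{2n+6}$ and the unspecified Eulerian-completion edges are exactly the points the paper sweeps under ``whatever $w_{2n+6}$ is'' and the remark that completion edges only originate in $\mathit{abs}_{\dots}|\mathit{out}$ and $\mathit{sp}_{\dots}$ states.
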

\begin{proof}
From the definition of \texttt{CCA} and \texttt{CCI} it follows that
at time $2n+5$ the parts $\mathit{pipe}_{6}$ and $\mathit{pipe}_{7}$
record the sequences $B_{\left(2n+3\right)}\dots B_{\left(2\right)}$
and $B'_{\left(2n+3\right)}\dots B'_{\left(2\right)}$ respectively,
where
\[
\begin{aligned}B_{\left(k\right)}= & \begin{cases}
\mathbf{1} & \mbox{if }w_{k}=w_{k+1}\\
\mathbf{0} & \mbox{otherwise}
\end{cases}\end{aligned}
\qquad\begin{aligned}B'_{\left(k\right)}= & \begin{cases}
\mathbf{0} & \mbox{if }w_{k}=w_{k+1}\\
\mathbf{1} & \mbox{otherwise}.
\end{cases}\end{aligned}
\]
Whatever the letter $w_{2n+6}$ is, Lemma \ref{lem: forcing plus b}
implies that 
\[
\mathit{cl}_{i}|x_{j}\in S_{4n+7}\Leftrightarrow\mathit{cl}_{i}|\neg x_{j}\notin S_{4n+7}\Leftrightarrow w_{2n-2j+2}\neq w_{2n-2j+3},
\]
from which the claim follows easily using Lemma \ref{lem: forcing plus b}
again.
\end{proof}
Let us define the assignment $\xi_{1},\dots,\xi_{n}\in\left\{ \mathbf{0},\mathbf{1}\right\} $.
By Lemma \ref{prop: tester dichotomy} the definition is correct and
does not depend on $i$: 
\[
\xi_{j}=\begin{cases}
\mathbf{1} & \mbox{if }\mathit{cl}_{i}|\mathit{tester}|\mathit{level}_{x_{1}}|\left(1,x_{j}\right)\notin S_{d_{i}}\\
\mathbf{0} & \mbox{if }\mathit{cl}_{i}|\mathit{tester}|\mathit{level}_{x_{1}}|\left(1,\neg x_{j}\right)\notin S_{d_{i}}.
\end{cases}
\]
The following lemma holds due to $\mathit{cl}_{\dots}|\mathit{limiter}$
parts.
\begin{lem}
\label{lem: limiting}For each $i\in\left\{ 1,\dots,m\right\} $ there
are at most two occurrences of $b$ in the word $v'_{i}$.\end{lem}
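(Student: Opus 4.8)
The plan is to pin down the time interval during which the limiter of $\mathit{cl}_i$ is forced to empty, and then read off the bound from the defining property of \texttt{LIMITER}. The first ingredient is that $\mathit{cl}_i|\mathit{limiter}|s_{1,0}\in S_{d_i-1}$, where $d_i=10n+\alpha_i+6$: since $w$ begins with $a^2$ by Lemma~\ref{lem: beginning of w}(2), the uppermost pipe states of $\mathit{cl}_i$ lie in $S_2$, and because every pipe edge carries both letters, activity propagates forward down the chain of pipes feeding the limiter; using Lemma~\ref{lem: forcing plus b} (the prefix $\overline{u}a^{2n+1}b$) to fix the moment this activity reaches the top of $\mathit{pipe}_8$ and following it through $\mathit{pipe}_8$ into $\mathit{limiter}|s_{1,0}$ lands exactly at time $d_i-1$ --- the same bookkeeping that was done for the forcer inside the proof of Lemma~\ref{lem: forcing plus b}.

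The second ingredient is the deadline $\mathit{cl}_i|\mathit{limiter}\cap S_{d_i+6n}=\emptyset$. The only way activity leaves the limiter and ultimately $\mathit{cl}_i$ is through $\mathit{pipe}_9$ (and the parallel $\mathit{pipes}_3$), both of length $\beta_i$, and then through the barrier of \texttt{ABS} parts, where it must die before the first occurrence of $b^2$ in $w$. By Lemma~\ref{lem: end of w}(2) no non-$\mathit{sp}$ state of any $\mathit{cl}_{\dots}$ part survives past time $\overline{d}$; since $\left|w\right|=d$ and $\alpha_i+\beta_i=(m-1)(12n-2)$ is independent of $i$, chaining the fixed lengths of $\mathit{pipe}_9/\mathit{pipes}_3$, of the \texttt{ABS} barrier and of the synchronizing tail $b^{4n+m+7}$ leaves no slack whatsoever, so the limiter must already be empty by time $d_i+6n=16n+\alpha_i+6$.

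Now the defining property of \texttt{LIMITER} applies: if $\mathit{limiter}|s_{1,0}$ is active and the whole limiter becomes inactive under a word of length at most $6n+1$, that word has at most two occurrences of $b$ --- a third $b$ produces a delay no remaining path of such length can absorb, just as a single misplaced $b$ is fatal for the forcer. The two ingredients say precisely that the word $w_{d_i}\,w_{d_i+1}\cdots w_{d_i+6n}$, of length $6n+1$, performs such a deactivation; by the decomposition $\overline{w}=\overline{u}a^{2n+1}b(\overline{v}_1v'_1c_1)\cdots(\overline{v}_mv'_mc_m)w_{\overline{d}-2}w_{\overline{d}-1}w_{\overline{d}}$ this word equals (the last letter of $\overline{v}_i$)$\cdot\,v'_i\cdot\,c_i$ and in particular contains $v'_i$ as a factor. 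Hence $v'_i$ has at most two occurrences of $b$, as claimed.

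The step I expect to be the main obstacle is the deadline: this is where the global length constraint $\left|w\right|=d$ actually does its work, and making it rigorous requires the careful chaining of the several fixed lengths ($\mathit{pipe}_9$, $\mathit{pipes}_3$, the \texttt{ABS} barrier, the tail $b^{4n+m+7}$), via Lemma~\ref{lem: end of w}, so that an extra step of delay inside the limiter provably forces $w$ to be too short --- together with a statement of the \texttt{LIMITER} gadget's behaviour robust to whatever else happens to be active in the limiter at time $d_i-1$. Locating $s_{1,0}$ in $S_{d_i-1}$ is routine forward propagation through pipes, and the concluding step is immediate from the design of the template, mirroring the forcer argument already carried out in Lemma~\ref{lem: forcing plus b}.
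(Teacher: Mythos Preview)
Your plan is the paper's plan: track $s_{1,0}$ into the limiter, use Lemma~\ref{lem: end of w}{\footnotesize(2)} and the length of $\mathit{pipe}_9$ to force an early deadline, and read off the $b$-count from the geometry of \texttt{LIMITER}. Where the two diverge is in the bookkeeping, and the slack you introduce is not harmless.

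The paper does \emph{not} use the informal sentence ``deactivate the entire limiter by a word of length at most $6n+1$'' as a black box. It works with the tighter data
\[
s_{1,0}\in S_{d_i},\qquad \text{and within the limiter only }s_{6n-2,0}\text{ can lie in }S_{d_i+6n-1},
\]
so that some prefix of $v'_i$ (length $\le 6n-1$) labels a path $s_{1,0}\to s_{6n-2,0}$; a case split on the prefix length $6n-3,\,6n-2,\,6n-1$ then gives exactly two $b$'s. Your version shifts the start to $d_i-1$ and the deadline to ``limiter empty at $d_i+6n$'', enlarging the window to $6n+1$ letters. But from $s_{1,0}$ the shortest path to $s_{6n-2,0}$ has length $6n-3$, and every $b$-edge merely preserves that distance; so a window of $6n+1$ letters, with only $s_{1,0}$ assumed active, admits three (even four, once you count the letters applied after the token has left the limiter) occurrences of $b$ without contradiction. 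The informal \texttt{LIMITER} description in the paper is a heuristic, not a lemma; taking it literally at the $6n+1$ scale does not yield the bound you need.

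Concretely: keep the paper's start time $s_{1,0}\in S_{d_i}$, use the path-length argument into $S'$ to get that at time $d_i+6n-1$ the only possibly active limiter state is $s_{6n-2,0}$, and then do the three-case analysis on the length of the prefix of $v'_i$ that reaches $s_{6n-2,0}$. That is exactly where the ``two'' comes from, and it cannot be outsourced to the template's slogan.
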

\begin{proof}
It is easy to see that $\mathit{cl}_{i}|\mathit{limiter}|s_{1,0}\in S_{10n+\alpha_{i}+6}$
and to note that
\[
v'_{i}=w_{10n+\alpha_{i}+7}\dots w_{16n+\alpha_{i}+5}.
\]
Within the part $\mathit{cl}_{i}|\mathit{limiter}$ no state except
for $s_{6n-2,0}$ can lie in $S_{16n+\alpha_{i}+5}$, because from
such states there is no path of length at most 
\[
\overline{d}-\left(16n+\alpha_{i}+5\right)=\beta_{i}+4
\]
 leading into $S'$. 

The shortest paths from $s_{1,0}$ to $s_{6n-2,0}$ have length $6n-3$
and each path from $s_{1,0}$ into $S'$ uses the state $s_{6n-2,0}$.
So there is a path $P$ leading from $s_{1,0}$ to $s_{6n-2,0}$ labeled
by a prefix of $v'$. We distinguish the following cases:
\begin{itemize}
\item If $P$ is of length $6n-3$, we just note that such path is unique
and labeled by $a^{6n-3}$. No $b$ occurs in $v'$ except for the
last two positions.
\item If $P$ is of length $6n-2$, it uses an edge of the form $s_{k,0}\overset{b}{\longrightarrow}s_{k+1,1}$.
Such edges preserve the distance to $s_{6n-2}$, so the rest of $P$
must be a shortest path from $s_{k+1,1}$ to $s_{6n-2,0}$. Such paths
are unique and labeled by $a^{6n-2-k}$. Any other $b$ can occur
only at the last position.
\item If $P$ is of length $6n-1$, it is labeled by whole $v'$. Because
any edge labeled by $b$ preserves or increases the distance to $s_{6n-2}$,
the path $P$ can use at most two of them.
\end{itemize}
\end{proof}
Now we choose any $i\in\left\{ 1,\dots,m\right\} $ and prove that
the assignment $\xi_{1},\dots,\xi_{n}$ satisfies the clause $\bigvee_{\lambda\in C_{i}}\lambda$.
Let $p\in\left\{ 0,1,2,3\right\} $ denote the number of unsatisfied
literals in $C_{i}$. 

As we claimed before, all tester columns corresponding to any $\lambda\in L_{\phi}$
have to be deactivated earlier than other columns. Namely, if $\mathit{cl}_{i}|\mathit{tester}|\mathit{level}_{x_{1}}|\left(1,\lambda\right)$
is active at time $d_{i}$, which happens if and only if $\lambda$
is not satisfied by $\xi_{1},\dots,\xi_{n}$, the word $v'_{i}c_{i}$
must not map it to $\mathit{cl}_{i}|\mathit{pipes}_{3}|s_{1,\mu\left(\lambda\right)}$.
If $\mathit{cl}_{i}|\mathit{tester}|\mathit{level}_{\lambda}$ is
of type \texttt{INC($\lambda$)}, the only way to ensure this is to
use the letter $b$ when the border of inactive area lies at the first
row of $\mathit{cl}_{i}|\mathit{tester}|\mathit{level}_{\lambda}$.
Thus each unsatisfied $\lambda\in C_{i}$ implies an occurrence of
$b$ in corresponding segment of $v'_{i}$: 
\begin{lem}
\label{lem: at least p of b}There are at least $p$ occurrences of
the letter $b$ in the word $v'_{i}$.\end{lem}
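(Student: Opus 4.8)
\emph{Proof plan.} Fix $i\in\{1,\dots,m\}$ and recall that $p$ is the number of literals of $C_i$ falsified by the assignment $\xi_1,\dots,\xi_n$. The plan is to attach to each falsified literal $\lambda\in C_i$ its own occurrence of $b$ inside $v'_i$, located in a three-letter block that depends only on $\lambda$; since distinct literals get disjoint blocks, $v'_i$ then contains at least $p$ occurrences of $b$.

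First I would set up bookkeeping. The word $v'_ic_i$ has length $6n$ and drives the $2n$ levels of $\mathit{cl}_i|\mathit{tester}$ one after another, three letters per level; call the three letters that drive $\mathit{level}_\lambda$ the \emph{$\lambda$-block}. These $2n$ blocks partition $v'_ic_i$, and because $|v'_i|=6n-1$ and $|c_i|=1$, each $\lambda$-block has its first letter inside $v'_i$ (only the final letter of the block for $\neg x_n$ can fall into $c_i$). Hence it is enough to show, for every falsified $\lambda\in C_i$, that the first letter of its $\lambda$-block equals $b$.

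So fix a falsified $\lambda\in C_i$; here is the argument I would run. By Lemma~\ref{prop: tester dichotomy} and the definition of $\xi$ we have $\mathit{cl}_i|\mathit{tester}|\mathit{level}_{x_1}|(1,\lambda)\in S_{d_i}$; following this active state through the tester exactly as in the proof of Lemma~\ref{lem: no lambda bar}, it is carried, along a path unique up to labeling, to $\mathit{cl}_i|\mathit{tester}|\mathit{level}_\lambda|(1,\lambda)$, still active at the instant $t_\lambda=d_i+3\kappa(\lambda)$ just before the $\lambda$-block is applied. Since $\lambda\in C_i$, the level $\mathit{level}_\lambda$ is of type \texttt{INC($\lambda$)}, in which the only edge leaving $(1,\lambda)$ toward $(3,\lambda)$ is the $a$-edge $(1,\lambda)\overset{a}{\longrightarrow}(2,\lambda)$ (the $b$-edge goes to $(2,r)$). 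Suppose, for contradiction, that the first letter of the $\lambda$-block were $a$. Then $\mathit{cl}_i|\mathit{tester}|\mathit{level}_\lambda|(3,\lambda)\in S_{t_\lambda+2}$, and the remaining suffix of $v'_ic_i$ carries it --- again along a path unique up to labeling --- to $\mathit{cl}_i|\mathit{pipes}_3|s_{1,\mu(\lambda)}$, so that $\mathit{cl}_i|\mathit{pipes}_3|s_{1,\mu(\lambda)}\in S_{d_i+6n}$. Pushing this state down $\mathit{cl}_i|\mathit{pipes}_3$ and across the links $\mathit{cl}_i|\mathit{pipes}_3|s_{\beta_i,\mu(\lambda)}\overset{a,b}{\longrightarrow}\overline{\lambda}\overset{a,b}{\longrightarrow}\mathit{abs}_{\mu(\lambda)+2}|\mathit{in}$, a short computation using $\alpha_i+\beta_i=(m-1)(12n-2)$ gives $\mathit{cl}_i|\mathit{abs}_{\mu(\lambda)+2}|\mathit{in}\in S_{\overline{d}-2}$. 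But this is not an $\mathit{sp}$ state, and $\delta(\mathit{cl}_i|\mathit{abs}_{\mu(\lambda)+2}|\mathit{in},w_{\overline{d}-1}w_{\overline{d}})$ is again a non-$\mathit{sp}$ state of $\mathit{cl}_i$ --- two letters cannot leave $\mathit{cl}_i$, nor reach any $\mathit{sp}$ state, since a path to $\mathit{sp}_{\mu(\lambda)+2}$ must first pass through $\mathit{abs}_{\mu(\lambda)+2}|\mathit{out}$ --- so it lies in $S_{\overline{d}}$, contradicting Lemma~\ref{lem: end of w}(\ref{enu: only sp}). Therefore the first letter of the $\lambda$-block is $b$.

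Distinct falsified literals of $C_i$ have disjoint $\lambda$-blocks, each furnishing its own $b$ inside $v'_i$; this proves the claim. I expect the real work to lie in the third step: (i) justifying that the active state truly reaches $\mathit{cl}_i|\mathit{tester}|\mathit{level}_\lambda|(1,\lambda)$, and later $\mathit{cl}_i|\mathit{pipes}_3|s_{1,\mu(\lambda)}$, rather than some state produced by collapsing through intermediate levels --- this is the path-uniqueness phenomenon already isolated in the proof of Lemma~\ref{lem: no lambda bar}, and it is where Lemma~\ref{lem: limiting} is used, to exclude a stray $b$ earlier in $v'_i$ that would disturb the $\lambda$-channel; and (ii) the two timing/distance counts (the one producing $S_{\overline{d}-2}$, and the one saying $s_2$ is out of reach in the remaining budget), which together are the quantitative core of the reduction, dual to the count carried out in the ``from an assignment to a word'' part.
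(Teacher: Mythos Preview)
Your argument is correct and matches the paper's approach: for each falsified $\lambda\in C_i$ you show that the token at $\mathit{cl}_i|\mathit{tester}|\mathit{level}_\lambda|(1,\lambda)$ at time $d_i+3\kappa(\lambda)$ would, if the next letter were $a$, be carried along a label-independent path to $\mathit{cl}_i|\overline{\lambda}$ and beyond, contradicting Lemma~\ref{lem: end of w}(\ref{enu: only sp}); hence $w_{d_i+3\kappa(\lambda)+1}=b$, and these positions are distinct and lie in $v'_i$. The paper stops at $\overline{\lambda}\in S_{\overline{d}-3}$ and invokes Lemma~\ref{lem: end of w}(\ref{enu: only sp}) directly, while you push two steps further to $\mathit{abs}_{\mu(\lambda)+2}|\mathit{in}\in S_{\overline{d}-2}$; your version is in fact the cleaner way to make that invocation legitimate, since Lemma~\ref{lem: end of w}(\ref{enu: only sp}) literally speaks only about $t\ge\overline{d}$.

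One correction to your closing expectations: Lemma~\ref{lem: limiting} is \emph{not} used here, and the paper does not invoke it in this proof. The ``path-uniqueness'' through the intermediate tester levels is a purely structural fact about the templates --- in every level $\mathit{level}_{\lambda'}$ with $\lambda'\neq\lambda$, both letters send the $\lambda$-column to itself --- so no hypothesis on the letters of $v'_i$ is required to carry the active state from $\mathit{level}_{x_1}|(1,\lambda)$ to $\mathit{level}_\lambda|(1,\lambda)$, nor from $\mathit{level}_\lambda|(3,\lambda)$ down to $\mathit{pipes}_3|s_{1,\mu(\lambda)}$. Lemma~\ref{lem: limiting} enters only afterwards, combined with the present lemma, to conclude $p\le 2$.
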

\begin{proof}
Let $\lambda_{1},\dots,\lambda_{p}$ be the unsatisfied literals of
$C_{i}$. From Lemma \ref{prop: tester dichotomy} it follows easily
that
\[
\mathit{cl}_{i}|\mathit{tester}|\mathit{level}_{\lambda_{k}}|\left(1,\lambda_{k}\right)\in S_{d_{i}+3\kappa\left(\lambda_{k}\right)}
\]
for each $k\in\left\{ 1,\dots,p\right\} $. The part $\mathit{cl}_{i}|\mathit{tester}|\mathit{level}_{\lambda_{k}}$
is of type \texttt{INC($\lambda_{k}$)}, which implies that any path
of the length 
\[
\left(\overline{d}-3\right)-\left(d_{i}+3\kappa\!\left(\lambda_{k}\right)\right)
\]
starting by $a$ takes $\mathit{cl}_{i}|\mathit{tester}|\mathit{level}_{\lambda_{k}}|\left(1,\lambda_{k}\right)$
to the state $\mathit{cl}_{i}|\overline{\lambda}$, which lies outside
$S_{\overline{d}-3}$, as it is implied by Lemma \ref{lem: end of w}{\footnotesize{(2).}}
We deduce that $w_{d_{i}+3\kappa\left(\lambda_{k}\right)+1}=b$.
\end{proof}
By Lemma \ref{lem: limiting} there are at most two occurrences of
$b$ in $v'_{i}$, so we get $p\leq2$ and there is at least one satisfied
literal in $C_{i}$.

\section*{References}

\bibliographystyle{elsarticle-harv}
\bibliography{C:/Users/Vojta/Desktop/SYNCHRO2/bib/ruco}

\end{document}